\documentclass[journal,doublecolumn,10pt]{IEEEtran}

\usepackage{epsfig,latexsym}
\usepackage{float}
\usepackage{indentfirst}
\usepackage{amsmath}
\usepackage{amssymb}
\usepackage{times}
\usepackage{subfigure}
\usepackage{psfrag}
\usepackage{cite}
\usepackage{lastpage}
\usepackage{fancyhdr}
\usepackage{color}

\sloppy

\newtheorem{Proposition}{Proposition}

\newtheorem{Corollary}{Corollary}
\newtheorem{theorem}{$\mathbf{Theorem}$}

\newtheorem{proposition}[Proposition]{Proposition}
\newtheorem{corollary}[Corollary]{$\mathbf{Corollary}$}

\begin{document}%
\title{ {\huge Wireless Information and Power Transfer in Cooperative Networks with Spatially Random Relays}}

\author{ Zhiguo Ding, \IEEEmembership{Member, IEEE},
Ioannis Krikidis, \IEEEmembership{Senior Member, IEEE}, Bayan Sharif, \IEEEmembership{Senior Member, IEEE}, and  H. Vincent Poor, \IEEEmembership{Fellow, IEEE}\thanks{
Part of this work was presented at the 2014 International Conference on Communications, Sydney,  June 2014. The work of Z. Ding was supported by a Marie Curie   International
Fellowship within the 7th European Community Framework Programme and
the UK EPSRC under grant number EP/I037423/1. The work of H. V. Poor was supported by the U. S. Air Force Office of Scientific Research
under MURI Grant FA9550-09-1-0643. I. Krikidis was supported by the Research Promotion Foundation, Cyprus under the project KOYLTOYRA/BP-NE/0613/04 ``Full-Duplex Radio: Modeling, Analysis and Design (FD-RD)''.

Z. Ding and H. V. Poor are with the Department of
Electrical Engineering, Princeton University, Princeton, NJ 08544,
USA.  I. Krikidis is with the Department of Electrical Engineering, Cyprus University, Cyprus. Z. Ding is also with the School of
Electrical, Electronic, and Computer Engineering, Newcastle
University, NE1 7RU, UK. B. Sharif is with College of Engineering, Khalifa University, Abu Dhabi, UAE.   }} \maketitle\vspace{-4em}
\begin{abstract}
 In this paper, the application of wireless information and power transfer to cooperative networks is investigated, where the relays in the network are   randomly located and based on the decode-forward strategy. For the scenario with one source-destination pair, three different strategies for using the available relays are studied, and their impact on the outage probability and diversity gain is characterized by applying stochastic geometry. By using the assumptions that  the path loss exponent is two and that the relay-destination distances are much larger than the source-relay distances,  closed form analytical results can be developed to demonstrate that the use of energy harvesting relays can achieve the same     diversity gain as the case with conventional self-powered relays.  For the scenario with multiple sources, the relays can be viewed as a type of scarce resource, where the sources compete with each other to get help from the relays. Such a competition is modeled as a coalition formation game, and two distributed game theoretic algorithms are developed based on different payoff functions.   Simulation results are provided to confirm the accuracy of the developed analytical results and facilitate a better performance comparison.
\end{abstract}

\section{Introduction}
Energy harvesting technologies have been recognized as a promising cost-effective solution to maximize the lifetime of wireless energy constrained  networks by eliminating   the cost for hard-wiring or replacing batteries of mobile nodes.  Conventional energy harvesting techniques scavenge energy from the environment, and therefore they are not applicable to the scenario in which wireless nodes do not have  any access to external   energy sources. This difficulty motivates  the recently developed concept of simultaneous  wireless  information and power transfer (SWIPT) \cite{varshney08, Grover10, Zhouzhang13, Nasirzhou,Ruizhangbroadcast13}.

The concept of SWIPT was first proposed in \cite{varshney08} and \cite{Grover10}, where it is assumed that  the receiver circuit can perform two functions, energy harvesting and information decoding, at the same time. Following these pioneering works, more practical receiver architectures have been developed by assuming that the receiver has two circuits to perform energy harvesting and information decoding separately \cite{Zhouzhang13,Nasirzhou}. Particularly the receiver either switches on  two circuits at different time, a strategy called time switching, or splits its observations into the two streams which are directed to two circuits at the same time, a strategy called  power splitting. The work in \cite{Zhouzhang13} considers a simple single-input single-output scenario, and an extension to multi-input multi-output broadcasting scenarios is considered in \cite{Ruizhangbroadcast13}.

SWIPT has been demonstrated as  a general energy harvesting technique and applied to various types of wireless communication networks. For example, in \cite{Ruicogswipt}, the application of SWIPT to cognitive radio networks is considered, where users from secondary networks   perform   energy harvesting from the primary transmitters and deliver information to their own destinations opportunistically. The use of SWIPT in OFDM networks has also received a lot of attention due to the success of WiMAX and   3GPP-Long Term Evolution (LTE) \cite{Huangl13}. In \cite{Liuzhangmiso} the combination of SWIPT with secure communications has also been considered, where an optimal beamforming and power allocation solution has been proposed to avoid the source information being intercepted by the energy harvesting eavesdroppers.

In this paper, we consider the application of SWIPT to wireless cooperative networks, where the relay transmissions are powered by the energy harvested from the relay observations. The contribution of this paper is two-fold. {\it Firstly} we focus on   cooperative networks with one source-destination pair and multiple {  energy harvesting} relays, and the impact of SWIPT on the reception reliability is studied by taking the spatial randomness of the relay locations into consideration, unlike   existing works in  \cite{Nasirzhou} and \cite{Dingpoor133} which treat the distances as constants. Stochastic geometry is used to characterize the density function for the wireless channels of the randomly deployed relays, where the developed analytical results are shown to match the simulations. In addition, three different strategies to use the available relays are studied, and we demonstrate that a more sophisticated relay selection strategy can ensure better reception reliability, albeit  with a price of more system overhead to realize the required channel state information (CSI) assumption. By using the assumptions that  the path loss exponent is two and the relay-destination distances are much larger than the source-relay distances, closed form analytical results can be developed to demonstrate that the use of energy harvesting relays can achieve the same     diversity gain as the case with conventional self-powered relays. However, the provided asymptotic studies show that the outage probability with energy harvesting relays is worse than that with conventional relays. For example, when a randomly chosen relay is used, it can be shown that the outage probability decays at a rate of $\frac{\log SNR}{SNR^2}$, instead of $\frac{1}{SNR^2}$ as in conventional cooperative networks, where SNR denotes the signal to noise ratio.

{\it Secondly} we consider a more challenging cooperative scenario where multiple sources communicate with one common destination via multiple energy harvesting relays. In such a scenario, the relays can be viewed as a type of scarce resource, where the sources compete with each other to get help from the relays. Such a competition can be modeled as a coalition formation game, and two distributed game theoretic algorithms are developed based on different payoff functions. In addition, analytical results are provided to demonstrate that a user-fairness approach should consider not only the SNR gain that a relay can contribute  to a coalition, but also how significant this gain is in contrast to the overall SNR of the coalition. Therefore we can avoid a situation with unbalanced relay allocation, i.e. some coalitions are crowded but some  coalitions do not get any help from the relays.  Both analytical and numerical results are provided to demonstrate the outage performance and convergence of the proposed coalition formation algorithms.

This paper is organized as follows. In Section \ref{section one source} the energy harvesting cooperative scenario with one source node is consider, and three different strategies for using relays are investigated. In Section \ref{section multipe source} the cooperative scenario with multiple source nodes is studied, and a game theoretic approach for coalition formation is proposed. Numerical results are shown in
Section~\ref{section simulation} for performance evaluation and comparison.  Finally, concluding remarks are
given in Section~\ref{conclusion}. The mathematical
proofs are collected in the appendix.

\begin{figure}[!htp]
\begin{center} \subfigure[Cooperative networks with one source]{\includegraphics[width=0.4\textwidth]{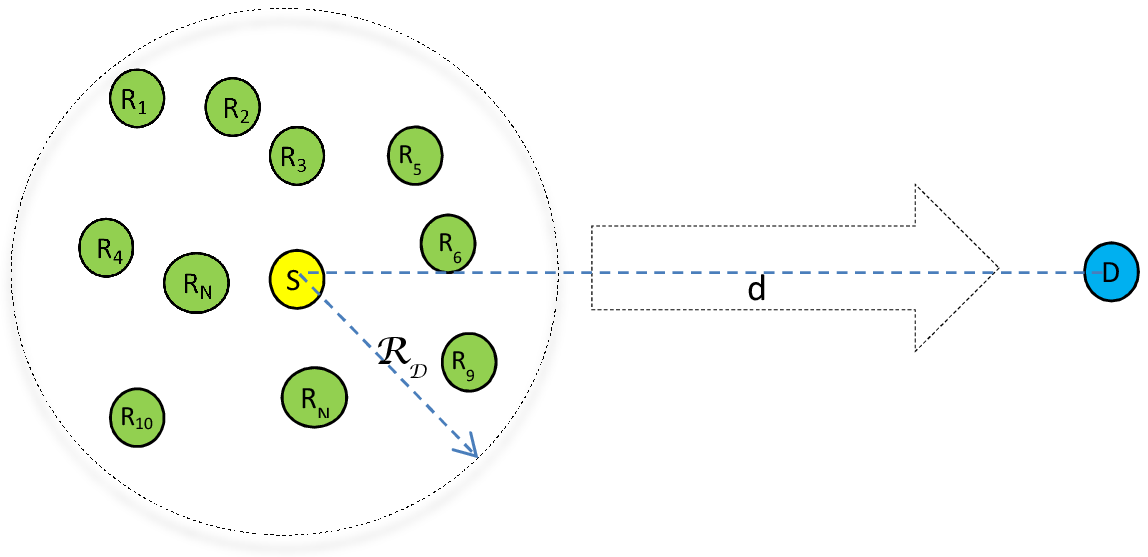}}
\subfigure[Cooperative networks with multiple sources ]{\label{fig set comparison
b2}\includegraphics[width=0.4\textwidth]{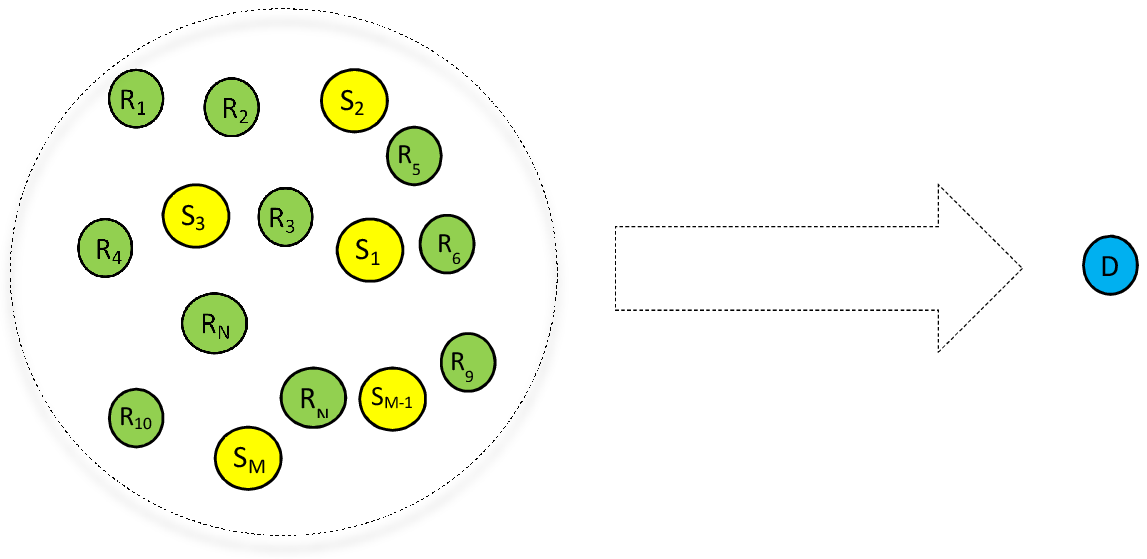}}
\end{center}\vspace{-1em}
 \caption{System diagrams for the scenarios considered in Section \ref{section one source} and Section \ref{section multipe source}. }\label{fig water}\vspace{-1em}
\end{figure}

\section{Energy Harvesting  Cooperative  Networks with One Source}\label{section one source}
Consider a cooperative scenario with one source-destination pair and multiple  {\it randomly deployed energy harvesting} relays, where the scenario with multiple sources will be considered in the next section.   In particular  consider a disc, denoted by $\mathcal{D}$, where the source is located at the origin of the disc and the radius of the disc is $R_\mathcal{D}$.  The location of the relays is modeled as a homogeneous Poisson point process with the intensity $\lambda_{\phi}$. Therefore the number of the relays in $\mathcal{D}$, denoted by $N$, is Poisson distributed, i.e. $\mathcal{P}(N=k)=\frac{\mu_\mathcal{D}^k}{k!}e^{-\mu_{\mathcal{D}}}$, where  $\mu_{\mathcal{D}}$ denotes the mean measure, i.e. $\mu_{\mathcal{D}}=\pi R_{\mathcal{D}}^2\lambda_{\phi}$. The distance between the source and the destination is denoted by $d$.

The decode-and-forward strategy is used at the relays, and the energy harvesting cooperative transmission consists of two phases. During the first time slot, the source broadcasts its message, and  all the relays and the destination listen to the source transmission. The energy harvesting relay will first try to direct the observation flow to the detection circuit, following from the power splitting approach in \cite{Zhouzhang13} and \cite{Nasirzhou}. If the detection is successful and there is any energy left, the remaining  signal flow will be directed to the energy harvesting circuit, and the harvested energy will be used to power the relay transmission. The observation split  to     the detection circuit is  given by
\begin{eqnarray}\label{signal model}
y_{ri} = \sqrt{(1-\theta_i)}\frac{h_i\sqrt{P}s}{\sqrt{1+d_i^\alpha}}+n_{r_i},
\end{eqnarray}
where $\theta_i$ is the power splitting coefficient,  $h_i$ models  frequency flat quasi static Rayleigh  fading, $d_i$ denotes the distance between the source and the $i$-th relay, $\alpha$ denotes the path loss exponent, $P$ is the transmission power, $s$ is the source message with the normalized  power and $n_{r_i}$ is  additive noise.  Note that  $\theta_i$ is used  to decide how much observation flow will be directed  to the energy harvesting circuit. For example, the choice of $\theta_i=0$ means that all observations will flow to the detection circuit, and $\theta_i=1$ means that the energy harvesting circuit receives  all of the observation flow.  The data rate supported by the source-relay channel is $R_i = \frac{1}{2}\log\left(1+ {(1-\theta_i)}\frac{|h_i|^2{P}}{ {1+d_i^\alpha}} \right)$.   Note that in \eqref{signal model}   the bounded path-loss model is used  to ensure that the  path loss is always larger than one for any distance \cite{Haenggi}, i.e. $1+d_i^\alpha>1$, even if $d<1$, whereas the simplified channel models used in \cite{Wangpoor11} and \cite{Shaodanma112} are valid only if the transceiver distance is larger than one.    To ensure successful detection at the relay given a targeted data rate $R$, i.e. $R_i= R$, the power splitting coefficient is set as follows:
\begin{equation}
\theta_i\triangleq  \max\left\{0,1-\frac{(1+d_i^\alpha)(2^{2R}-1)}{P|h_i|^2}\right\}.
\end{equation}
Note that the choice of $\theta_i$ in the above equation is based on the     strategy that a relay first tries to achieve information detection and then performs energy harvesting if there is any energy left. It is important to note  that different receiver strategies could result in different choices of $\theta_i$ as well as different values of the achieved outage performance. In addition, in this paper we do not consider how to use a relay that cannot decode the source information, but the use of such relays can potentially  yield more opportunities, particularly in two types of situations. The first is   the case with multiple pairs of sources and destinations. A relay that  cannot detect Source A's information can harvest energy from this source's signals, and use it to power the relay transmission to Source B. The second is when the the relay can store the energy harvested from the current time slot and use it for   future time slots. The consideration of different detection and energy harvesting strategies is beyond
the scope of this paper.

When the channel condition is poor,  i.e. $\theta=0$, no energy can be harvested from the observation since all the received signals will be directed to the detection circuit. When $\theta>0$,   the energy harvested at the $i$-th relay is given by
\begin{eqnarray}
E_{ri} =\frac{T\eta}{2} \left(\frac{|h_i|^2}{ {1+d_i^\alpha}} P - \tau \right),
\end{eqnarray}
where $\eta$ is the energy harvesting efficiency coefficient, $T$ is the time period for one time slot and  $\tau= 2^{2R}-1$.  It is assumed that the two phases of cooperative transmissions have the same time period.
So at the $i$-th relay,  the transmission power available for the second-stage relay transmission is
\begin{eqnarray}\label{power1}
P_{ri} =\frac{E_{ri}}{\frac{T}{2}}=\eta \left(\frac{|h_i|^2}{ {1+d_i^\alpha}} P - \tau \right).
\end{eqnarray}
Where there are multiple relays, i.e. $N>1$, it is of interest to study how to utilize these relays. In particular  we will study the performance of three   strategies with different CSI assumptions, as shown in the following three subsections.

\subsection{Random relay selection}
Prior to the transmissions, the source randomly selects a relay as its helper, a strategy that does not require any CSI.   Without loss of generality, consider that the $i$-th relay is selected to help the source.  This relay will use the harvested energy to power the relaying transmission, if it can decode the source message correctly. Therefore during the second time slot, the destination receives
\begin{eqnarray}
y_{D} = \frac{g_i}{\sqrt{1+c_i^\alpha}}\sqrt{P_{ri}}s+n_{D},
\end{eqnarray}
where $c_i$ is the distance between the $i$-th relay and the destination, and $g_i$ is the multi-path fading channel coefficient. After combining the observation from the first time slot, the receive SNR at the destination is given by
\begin{eqnarray}
SNR_i = \frac{|h_d|^2}{1+d^\alpha}P+\eta\frac{|g_i|^2}{ {1+c_i^\alpha}} \left(\frac{|h_i|^2}{ {1+d_i^\alpha}} P - \tau \right),
\end{eqnarray}
conditioned on a successful detection at the $i$-th relay, where $h_d$ is the multi-path fading channel. To simplify notation, we let $x_0\triangleq \frac{|h_d|^2}{1+d^\alpha}$, $x_i\triangleq \frac{|h_i|^2}{1+d_i^\alpha}$ and $y_i\triangleq \frac{|g_i|^2}{1+c_i^\alpha}$.

In this paper, the outage probability and diversity gain will be used for performance evaluation, as explained in the following. Provided  that the optimal channel coding scheme with  infinite coding length is used, the bit error probability can be closely bounded by the outage probability \cite{Zhengl03}. On the other hand, the diversity gain is an important metric for the robustness of transmissions in the high SNR regime.
  The outage probability can be
 Therefore the outage probability given the use of the $i$-th relay is
\begin{align}\nonumber
\mathcal{P}_i &\triangleq \mathcal{P}(N=0, P x_0<\tau) +\mathcal{P}\left( SNR_i < \tau, Px_i>\tau,N\geq 1\right)\\ \nonumber &+\mathcal{P}\left(P x_0<\tau,  Px_i<\tau,N\geq 1\right) \\ \nonumber &= \mathcal{P}(N=0,  x_0<\epsilon) + \underset{Q_1}{\underbrace{\mathcal{P}\left(x_0+ \eta y_i \left( x_i - \epsilon \right) < \epsilon, x_i>\epsilon\right)}}\\   &\times\mathcal{P}(N\geq 1) +\mathcal{P}(x_0<\epsilon, x_i<\epsilon,N\geq 1).\label{eqpi}
\end{align}
where $\epsilon=\frac{\tau}{\rho}$. The first probability in \eqref{eqpi} is for the event that there is no relay deployed in $\mathcal{D}$; the second and third ones are for the events that there is at least one relay in $\mathcal{D}$.  Note that the probability $Q_1$ is conditioned on $N\geq 1$, but such a condition can be omitted since it has no impact on the calculation of $Q_1$, as shown in the appendix.  Particulary the second probability is for the event that the $i$-th relay can detect the source message correctly but the overall SNR at the destination cannot support the targeted data rate; and the third one is for the event that neither the $i$-th relay nor the destination can  detect  the source message.  The following theorem provides an exact expression and a high-SNR approximation for the outage probability.
\begin{theorem}\label{thorem1}
The outage probability achieved by an energy harvesting cooperative protocol with a randomly chosen relay is given by \eqref{theorem},
\begin{figure*}
\begin{eqnarray}\nonumber
\mathcal{P} &=& \left(1-e^{-(1+d^\alpha)\epsilon}\right) e^{-\pi R_{\mathcal{D}}^2\lambda_{\phi}}  + \frac{\left(1- e^{-\pi R_{\mathcal{D}}^2\lambda_{\phi}} \right)}{\pi R_{\mathcal{D}}^2}
\int_{0}^{\epsilon}\int^{R_{\mathcal{D}}}_{0}\int^{2\pi}_{0}e^{-(1+r^\alpha)\epsilon}\left(1-2 q(r, \theta) \mathrm{K}_1\left(2q(r, \theta)\right) \right)
\\ \label{theorem} && \times rdrd\theta f_{x_0}(x_0) dx_0 +\frac{2}{ R_{\mathcal{D}}^2} \left(1- e^{-\pi R_{\mathcal{D}}^2\lambda_{\phi}} \right)\int_0^{R_{\mathcal{D}}}\left(1-e^{-(1+d^\alpha)\epsilon}\right)\left(1-e^{-(1+r^\alpha)\epsilon}\right)rdr,
\end{eqnarray}
\end{figure*}
where $f_{x_0}=\frac{1}{1+d^\alpha}e^{-(1+d^\alpha)x_0}$ and
$q(r, \theta)\triangleq \sqrt{\frac{(1+\left(r^2+d^2-2 rd \cos(\theta) \right)^\frac{\alpha}{2})(1+r^\alpha)(\epsilon-x_0)}{\eta}}$. For the special case of $\alpha=2$ and $R_{\mathcal{D}}<<d$, the outage probability can be approximated at high SNR as in \eqref{theorem approximation},
\begin{figure*}
\begin{eqnarray}\label{theorem approximation}
\mathcal{P} &\approx& (1+d^2)\epsilon e^{-\pi R_{\mathcal{D}}^2\lambda_{\phi}} - \frac{\eta a_1}{4  R_{\mathcal{D}}^2}
\left[R_{\mathcal{D}}^2(R_{\mathcal{D}}^2+2)\left(\frac{(1+d^2)^2\epsilon^2}{\eta^2} \ln \frac{(1+d^2)\epsilon}{\eta}-\frac{(1+d^2)^2\epsilon^2}{\eta^2}\right)+\frac{(1+d^2)^2\epsilon^2}{\eta^2} \right.\\ \nonumber &&\left. \times\left((1+R^2_{\mathcal{D}})^2\ln(1+R^2_{\mathcal{D}})     +4e_1 R_{\mathcal{D}}^2(R_{\mathcal{D}}^2+2)\right)\right]\left(1-e^{-\pi R_{\mathcal{D}}^2\lambda_{\phi}} \right)  +\frac{1}{2}(R_{\mathcal{D}}^2+2)(1+d^2)\epsilon^2\left(1-e^{-\pi R_{\mathcal{D}}^2\lambda_{\phi}} \right),
\end{eqnarray}
\end{figure*}
where $a_1=(1-(1+d^2)\epsilon)$, $e_1=-\frac{1}{4}\left(\psi(1)+\psi(2)\right)$ and $\psi(\cdot)$ denotes the psi function.
\end{theorem}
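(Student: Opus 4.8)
The plan is to evaluate the three probabilities appearing in \eqref{eqpi} one at a time and then to expand each in the high-SNR regime. \emph{Exact expression.} Since the relay count $N$ is independent of every channel gain, the first term factorizes as $\mathcal{P}(N=0)\mathcal{P}(x_0<\epsilon)=e^{-\pi R_{\mathcal{D}}^2\lambda_{\phi}}\bigl(1-e^{-(1+d^\alpha)\epsilon}\bigr)$, using that $x_0$ is exponential with rate $1+d^\alpha$. For the third term, conditioning on the location of the chosen relay --- which, given $N\geq 1$, is uniform on $\mathcal{D}$ with polar density $\frac{r}{\pi R_{\mathcal{D}}^2}$ --- renders $x_i$ exponential with rate $1+r^\alpha$, and since $x_0,x_i$ are independent this term equals $\bigl(1-e^{-\pi R_{\mathcal{D}}^2\lambda_{\phi}}\bigr)\bigl(1-e^{-(1+d^\alpha)\epsilon}\bigr)\frac{2}{R_{\mathcal{D}}^2}\int_0^{R_{\mathcal{D}}}\bigl(1-e^{-(1+r^\alpha)\epsilon}\bigr)r\,dr$, the last line of \eqref{theorem}. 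The substantive part is $Q_1$: I would first condition on the relay position $(r,\theta)$ and on $x_0$, noting that $Q_1$ forces $x_0<\epsilon$ and that conditioning on $N\geq1$ is immaterial because a randomly chosen relay is uniform on $\mathcal{D}$ whatever the value of $N$. Given $(r,\theta)$ the distance $c_i=\sqrt{r^2+d^2-2rd\cos\theta}$ is fixed and $y_i$ is exponential with rate $1+c_i^\alpha$, so integrating out $y_i$ gives $1-\exp\bigl(-(1+c_i^\alpha)\frac{\epsilon-x_0}{\eta(x_i-\epsilon)}\bigr)$; substituting $u=x_i-\epsilon$ and integrating against the exponential density of $x_i$ leaves $\int_0^\infty e^{-a/u-bu}\,du$ with $a=\frac{(1+c_i^\alpha)(\epsilon-x_0)}{\eta}$ and $b=1+r^\alpha$. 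The key identity is $\int_0^\infty e^{-a/u-bu}\,du=2\sqrt{a/b}\,\mathrm{K}_1\bigl(2\sqrt{ab}\,\bigr)$, which returns exactly $e^{-(1+r^\alpha)\epsilon}\bigl(1-2q(r,\theta)\mathrm{K}_1(2q(r,\theta))\bigr)$ since $2\sqrt{ab}=2q(r,\theta)$; averaging over the uniform relay position and over $x_0\in(0,\epsilon)$ against $f_{x_0}$, and multiplying by $\mathcal{P}(N\geq1)=1-e^{-\pi R_{\mathcal{D}}^2\lambda_{\phi}}$, yields the middle line of \eqref{theorem}.

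\emph{High-SNR approximation.} Here $\epsilon=\tau/\rho\to0$ while $e^{-\pi R_{\mathcal{D}}^2\lambda_{\phi}}$ is fixed. In the first and third terms I would simply use $1-e^{-(1+d^2)\epsilon}\approx(1+d^2)\epsilon$, $1-e^{-(1+r^2)\epsilon}\approx(1+r^2)\epsilon$, and $\int_0^{R_{\mathcal{D}}}(1+r^2)r\,dr=\frac{1}{4}R_{\mathcal{D}}^2(R_{\mathcal{D}}^2+2)$, which reproduces the first and last terms of \eqref{theorem approximation}. For the middle term, set $\alpha=2$ and use $R_{\mathcal{D}}\ll d$ to replace $1+c_i^\alpha$ by $1+d^2$ (the $\cos\theta$ correction averages to zero), so that $q(r,\theta)^2\approx\frac{(1+d^2)(1+r^2)(\epsilon-x_0)}{\eta}$ is small and the small-argument expansion $2q\mathrm{K}_1(2q)\approx 1+2q^2\ln q+4e_1 q^2$ applies; this is where $e_1=-\frac{1}{4}\bigl(\psi(1)+\psi(2)\bigr)$ enters and where the logarithm originates. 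What remains is the elementary triple integral of $q^2(\ln q^2+4e_1)$ against $r\,dr$, $d\theta$ and $f_{x_0}(x_0)\,dx_0\approx\frac{dx_0}{1+d^2}$: the $x_0$-integral is handled by $\int_0^\epsilon v\ln v\,dv=\frac{\epsilon^2}{2}\ln\epsilon-\frac{\epsilon^2}{4}$ after $v=\epsilon-x_0$, which produces the factor $\ln\frac{(1+d^2)\epsilon}{\eta}$; the $r$-integral by $\int_0^{R_{\mathcal{D}}}(1+r^2)\ln(1+r^2)r\,dr$ via $w=1+r^2$, which produces the $(1+R_{\mathcal{D}}^2)^2\ln(1+R_{\mathcal{D}}^2)$ term; collecting the remaining constants assembles the bracket of \eqref{theorem approximation}.

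\emph{Main obstacle.} The only genuinely non-routine step is recognizing the Bessel integral $\int_0^\infty e^{-a/u-bu}\,du=2\sqrt{a/b}\,\mathrm{K}_1(2\sqrt{ab})$ after eliminating the source-relay gain $x_i$; once the $\mathrm{K}_1$ form is in hand, both the exact formula and its high-SNR expansion follow by routine calculus. The second delicate point is keeping track of which corrections --- the $\cos\theta$ term in $c_i$, the factor $e^{-(1+d^2)x_0}$ in $f_{x_0}$, and $e^{-(1+r^2)\epsilon}$ in the $Q_1$ integrand --- are negligible to the order at which \eqref{theorem approximation} is stated and which must be retained to recover the $R_{\mathcal{D}}$-dependent constants.
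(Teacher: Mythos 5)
Your route coincides with the paper's appendix proof: the decomposition in \eqref{eqpi}, the uniform position density $\tfrac{1}{\pi R_{\mathcal{D}}^2}$ from the Poisson point process, integrating out $y_i$ and then $x_i$ via $\int_0^\infty e^{-a/u-bu}\,du=2\sqrt{a/b}\,\mathrm{K}_1(2\sqrt{ab})$ (the paper invokes Eq.~3.324 of Gradshteyn--Ryzhik), the law-of-cosines substitution for $c_i$, and, for the high-SNR part, the small-argument expansion of $x\mathrm{K}_1(x)$ together with $c_i\approx d$ and $\alpha=2$; your $2q\mathrm{K}_1(2q)\approx 1+2q^2\ln q+4e_1q^2$ is identical to the paper's $1+2q^2(\ln q+c_0)$ since $e_1=c_0/2$. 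The exact expression and the first and last terms of \eqref{theorem approximation} are handled exactly as in the paper.

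There is, however, a concrete slip in your high-SNR evaluation of the middle term. You average against ``$f_{x_0}(x_0)\,dx_0\approx\frac{dx_0}{1+d^2}$'', but $x_0=\frac{|h_d|^2}{1+d^2}$ is exponential with rate $1+d^2$ --- as you yourself use when computing the exact first and third terms --- so on $[0,\epsilon]$ the density is $f_{x_0}(x_0)=(1+d^2)e^{-(1+d^2)x_0}\approx(1+d^2)a_1$, not $\frac{1}{1+d^2}$ (the theorem statement's displayed $f_{x_0}$ is a typo; the paper's proof states and uses the correct density). With your normalization the $Q_1$ contribution comes out a factor $(1+d^2)^2$ too small and the prefactor $a_1=1-(1+d^2)\epsilon$ never appears, so the bracket of \eqref{theorem approximation} is not in fact assembled: the $\frac{(1+d^2)^2\epsilon^2}{\eta^2}$ factors and the leading $\eta a_1$ in \eqref{q12} arise precisely from multiplying $q^2\approx\frac{(1+d^2)(1+r^2)(\epsilon-x_0)}{\eta}$ by $f_{x_0}(x_0)\approx(1+d^2)a_1$ before performing the $v=\epsilon-x_0$ integration. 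The slip does not change the $\epsilon^2\ln\frac{1}{\epsilon}$ scaling, hence not the diversity conclusion of Corollary \ref{corollary1}, but it does mean the stated coefficients are not recovered; restoring the correct density (and keeping its exponential factor to first order, which is where $a_1$ comes from) repairs the argument.
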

\begin{proof}
See the appendix.
\end{proof}
Note that the   exact expression of the outage probability shown in \eqref{theorem} is applicable for  any choices of $\alpha$ and distances. However, such a general expression is very complicated since it contains multiple integrals. Therefore the use of such an involved expression is not helpful for developing insights   about the fundamental limits  of energy harvesting relaying, which motivates  the studies for the special case with $\alpha=2$ and $R_{\mathcal{D}}<<d$,  i.e. the radius of $\mathcal{D}$ is much smaller than the source-destination distance. Numerical results demonstrate that the approximated analytical results in Theorem \ref{thorem1} are accurate when $d>5 R_{\mathcal{D}}$, as can be seen  in Section \ref{section simulation}. Theorem \ref{thorem1} can be used to study the diversity gain achieved by the energy harvesting cooperative scheme, as shown in the following corollary.
\begin{corollary}\label{corollary1}
For the special case with $\alpha=2$,  $R_{\mathcal{D}}<<d$ and  $N\geq 1$, the diversity gain achieved by the energy harvesting cooperative protocol with a randomly chosen  relay is $2$.
\end{corollary}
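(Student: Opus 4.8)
The plan is to obtain the diversity gain directly from the high-SNR approximation \eqref{theorem approximation} of Theorem~\ref{thorem1}, using the standard definition $d=-\lim_{\rho\to\infty}\frac{\log\mathcal{P}}{\log\rho}$ together with the fact that $\epsilon=\frac{\tau}{\rho}\to 0$ as $\rho\to\infty$. Thus the task reduces to identifying the most slowly vanishing term of $\mathcal{P}$ as a function of $\epsilon$ and reading off its exponent.

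First I would impose the conditioning $N\geq 1$ required by the statement, which removes the term $(1+d^2)\epsilon\,e^{-\pi R_{\mathcal{D}}^2\lambda_{\phi}}$ in \eqref{theorem approximation}; this term is the contribution of the $N=0$ outage event and, being $\Theta(\epsilon)$, would by itself give only diversity one. Each of the remaining terms carries the common factor $\left(1-e^{-\pi R_{\mathcal{D}}^2\lambda_{\phi}}\right)$ and is a constant multiple of either $\epsilon^{2}$ or $\epsilon^{2}\ln\frac{(1+d^2)\epsilon}{\eta}$.

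Next I would compare the orders of these terms as $\epsilon\to 0$. Since $a_1=1-(1+d^2)\epsilon\to 1$ and $\ln\frac{(1+d^2)\epsilon}{\eta}=-\ln\frac{\eta}{(1+d^2)\epsilon}\to-\infty$, the term proportional to $-a_1\,\epsilon^{2}\ln\frac{(1+d^2)\epsilon}{\eta}$ is positive for small $\epsilon$ and of magnitude $\Theta\!\left(\epsilon^{2}\ln\frac{1}{\epsilon}\right)$, strictly dominating every pure $\epsilon^{2}$ term. Hence $\mathcal{P}\sim C\,\epsilon^{2}\ln\frac{1}{\epsilon}$ with $C=\frac{(R_{\mathcal{D}}^2+2)(1+d^2)^2}{4\eta}\left(1-e^{-\pi R_{\mathcal{D}}^2\lambda_{\phi}}\right)>0$; a short check that this coefficient is genuinely nonzero (no cancellation among the $\epsilon^{2}\ln\frac1\epsilon$ contributions) completes the identification of the leading behavior.

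Finally, substituting $\epsilon=\frac{\tau}{\rho}$ yields $\mathcal{P}\sim C\tau^{2}\rho^{-2}\ln\frac{\rho}{\tau}$, so that
\[
-\frac{\log\mathcal{P}}{\log\rho}=2-\frac{\log(C\tau^{2})}{\log\rho}-\frac{\log\ln\frac{\rho}{\tau}}{\log\rho}\;\longrightarrow\;2 ,
\]
because $\ln\ln\rho=o(\ln\rho)$. Therefore the diversity gain equals $2$. The only genuinely delicate point is the order comparison in the previous paragraph: one must confirm that the logarithmic factor is really present (it is the $\frac{\log SNR}{SNR^2}$ behavior advertised in the introduction) and that, although it makes the outage decay slower than in the conventional self-powered case, a sub-polynomial factor contributes nothing to the limit defining $d$, so the diversity gain is unchanged.
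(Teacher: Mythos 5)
Your proposal is correct and follows essentially the same route as the paper's own (very terse) proof: drop the $N=0$ term from \eqref{theorem approximation} by the conditioning $N\geq 1$, identify the dominant $\epsilon^{2}\ln\frac{1}{\epsilon}$ behavior, and note that the sub-polynomial logarithmic factor does not affect the limit defining the diversity gain. Your version simply spells out the details (leading coefficient, substitution $\epsilon=\tau/\rho$) that the paper leaves as "straightforward."
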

\begin{proof}
When $N\geq 1$, the first factor in \eqref{theorem approximation} can be ignored, as explained in the proof for Theorem \ref{thorem1}. Therefore  by applying $\epsilon\rightarrow 0$, the corollary can be obtained in a straightforward manner.
\end{proof}
Recall that in a conventional cooperative network, the use of a randomly chosen relay will also yield a diversity gain of $2$ \cite{Laneman04}. Corollary \ref{corollary1} states that the use of energy harvesting relays will not decrease the diversity gain of cooperative protocols.
However, an important observation from \eqref{theorem approximation} is that the dominant factor in the expression for the outage probability at high SNR is $-\epsilon^2\ln  \epsilon$, or equivalently $\frac{\ln SNR}{SNR^2}$. Therefore, the use of energy harvesting relays will cause the outage probability to decrease  at a rate of $\frac{\ln SNR}{SNR^2}$, whereas a faster decaying rate of $\frac{1}{SNR^2}$ can be realized in a conventional cooperative network.

\subsection{Relay selection based on the second order statistics of the channels}
For many practical communication scenarios, it is realistic to obtain the second order statistics of   wireless channels. Such information is determined by the distance between the transceivers and changes  more slowly compared to small scale multi-path fading. In this section, we will focus on the impact of relay selection on the outage probability when the second order statistics of the channels are known. To make meaningful conclusions, we assume  $N\geq1$, $\alpha=2$, and $R_{\mathcal{D}}<<d$, the same conditions as in Theorem \ref{thorem1}. With these conditions,   intuition suggests  that the optimal   strategy of relay selection is to find the relay closest  to the source, which can be confirmed in the following proposition.
\begin{proposition}\label{proposition 1}
Assume $N\geq1$, $\alpha=2$ and $R_{\mathcal{D}}<<d$.
Selecting a relay that  is closest to the source   minimizes the outage probability at   high SNR.
\end{proposition}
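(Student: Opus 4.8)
The plan is to work from the conditional outage expression already assembled in \eqref{eqpi}. Under the conditioning $N\ge 1$, and writing $d_i$ for the source-distance and $c_i$ for the destination-distance of the relay that is used, equation \eqref{eqpi} reduces to $\mathcal{P}_i = Q_1 + \mathcal{P}(x_0<\epsilon)\,\mathcal{P}(x_i<\epsilon)$, with $Q_1 = \mathcal{P}\big(x_0 + \eta y_i(x_i-\epsilon) < \epsilon,\ x_i>\epsilon\big)$. The direct-link variable $x_0$ has a distribution that does not depend on the relay, and under $R_{\mathcal{D}}\ll d$ every relay obeys $c_i\in[d-R_{\mathcal{D}},d+R_{\mathcal{D}}]$, so $1+c_i^2 = (1+d^2)\big(1+O(R_{\mathcal{D}}/d)\big)$ for all of them. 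Hence, to leading order, the relay choice enters only through $1+d_i^2$, and it suffices to show that $\mathcal{P}_i$ is, in the high-SNR limit $\epsilon=\tau/\rho\to 0$, a strictly increasing function of $d_i$: the minimizer is then the relay with the smallest $d_i$, i.e.\ the one closest to the source.

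First I would pin down the high-SNR behaviour of $Q_1$ for a fixed relay. Conditioning on $x_0$ and then on $x_i$ (exponential with rates $1+d^2$ and $1+d_i^2$ respectively) and using the same identity $\int_0^\infty e^{-au-b/u}\,du = 2\sqrt{b/a}\,\mathrm{K}_1(2\sqrt{ab})$ that underlies the proof of Theorem~\ref{thorem1}, one gets $Q_1 = e^{-(1+d_i^2)\epsilon}\int_0^\epsilon\big(1-2q_i\,\mathrm{K}_1(2q_i)\big)f_{x_0}(x_0)\,dx_0$ with $q_i = \sqrt{(1+d_i^2)(1+c_i^2)(\epsilon-x_0)/\eta}$. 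Letting $\epsilon\to 0$, I replace $e^{-(1+d_i^2)\epsilon}$ by $1$ and $f_{x_0}(x_0)$ by $1/(1+d^2)$ on $[0,\epsilon]$, insert the small-argument expansion $1-2z\,\mathrm{K}_1(2z) = -z^2\ln z^2 + O(z^2)$, and integrate $q_i^2 = \tfrac{(1+d_i^2)(1+c_i^2)}{\eta}(\epsilon-x_0)$ over $[0,\epsilon]$; the elementary integral $\int_0^\epsilon(\epsilon-x_0)\ln(\epsilon-x_0)\,dx_0 = \tfrac12\epsilon^2\ln\epsilon - \tfrac14\epsilon^2$ supplies the dominant $\epsilon^2\ln\epsilon$ contribution. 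After using $1+c_i^2\approx 1+d^2$ this yields
\[
Q_1 \;\approx\; \frac{1+d_i^2}{2\eta}\,\big(-\epsilon^2\ln\epsilon\big) + O(\epsilon^2),
\]
that is, exactly the $\epsilon^2\ln\epsilon$ mechanism already visible in \eqref{theorem approximation}.

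The remaining piece is elementary: $\mathcal{P}(x_0<\epsilon)\,\mathcal{P}(x_i<\epsilon) = \big(1-e^{-(1+d^2)\epsilon}\big)\big(1-e^{-(1+d_i^2)\epsilon}\big) = (1+d^2)(1+d_i^2)\epsilon^2 + O(\epsilon^3) = O(\epsilon^2)$, which is negligible against $-\epsilon^2\ln\epsilon$ because $-\ln\epsilon\to\infty$. Collecting both pieces, $\mathcal{P}_i = \tfrac{1+d_i^2}{2\eta}\,(-\epsilon^2\ln\epsilon) + O(\epsilon^2)$. For two relays with $d_i<d_j$ the difference $\mathcal{P}_i - \mathcal{P}_j = \tfrac{d_i^2-d_j^2}{2\eta}(-\epsilon^2\ln\epsilon) + O(\epsilon^2)$ is strictly negative for all sufficiently small $\epsilon$ (i.e.\ sufficiently large SNR), so the relay nearest the source strictly minimizes the outage probability at high SNR, which is the assertion of Proposition~\ref{proposition 1}.

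The main obstacle is the asymptotic analysis of $Q_1$: extracting the $\epsilon^2\ln\epsilon$ term from the logarithmic singularity of $\mathrm{K}_1$ at the origin and checking that what is left is genuinely $O(\epsilon^2)$ uniformly over the bounded relay positions. This is essentially the computation behind \eqref{theorem approximation} and can be reused rather than redone. A secondary point is justifying that the $c_i$-dependence is harmless: since $c_i\in[d-R_{\mathcal{D}},d+R_{\mathcal{D}}]$, replacing $1+c_i^2$ by $1+d^2$ changes the coefficient of the dominant term only by a factor $1+O(R_{\mathcal{D}}/d)$, which vanishes in the stated regime and cannot reverse the strict ordering in $d_i$. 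One could instead avoid the expansion altogether with a stochastic-dominance argument — the closest relay has a stochastically larger $x_i$, and $\mathcal{P}_i = \mathcal{P}\big(x_0 < \epsilon - \eta y_i(x_i-\epsilon)^+\big)$ is monotone decreasing in $x_i$ — but the high-SNR expansion is the more direct route and it delivers the explicit dominant term as a bonus.
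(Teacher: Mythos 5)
Your proposal is correct and follows essentially the paper's own route: fix the relay distances, reuse the Bessel small-argument expansion from the proof of Theorem~\ref{thorem1} to extract the dominant $-\epsilon^2\ln\epsilon$ term whose coefficient scales with $(1+d_i^2)$ (with $1+c_i^2\approx 1+d^2$ under $R_{\mathcal{D}}\ll d$), and observe that the remaining contributions, including $\mathcal{P}(x_0<\epsilon)\mathcal{P}(x_i<\epsilon)$, are only $O(\epsilon^2)$ — the paper merely phrases the monotonicity in $d_i$ via the derivative $\partial\tilde{\mathcal{P}}/\partial d_i$ instead of comparing two relays' dominant coefficients. One harmless slip: since $x_0$ is exponential with rate $1+d^2$, its density near zero is $1+d^2$ rather than $1/(1+d^2)$, so your constant is off by a relay-independent factor $(1+d^2)^2$, which does not affect the ordering in $d_i$ or the conclusion.
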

\begin{proof}
See the appendix.
\end{proof}
As can be seen from Proposition \ref{proposition 1}, the criterion for relay selection is   based only on the source-relay distances. This is due to the  assumption $R_{\mathcal{D}}<<d$ which leads to the approximation that all relay-destination distances are the same.

By using  the density function  of the shortest source-relay distance, we can obtain the following theorem  about the diversity gain achieved by the proposed relay selection scheme.
\begin{theorem}\label{theorem2}
Assume $\alpha=2$,  $R_{\mathcal{D}}<<d$ and  $N\geq 1$. The diversity gain achieved by the relay selection scheme based on the second order statistics of the channel  is $2$.
\end{theorem}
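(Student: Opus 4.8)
\end{theorem}
\begin{proof}
The plan is to follow the route announced just before the statement: use the density of the shortest source--relay distance to reduce the outage probability of the selection scheme to a low-dimensional integral of exactly the type handled in Theorem~\ref{thorem1}, and then read off the SNR exponent.

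First I would condition on $N\ge1$ and record the law of the nearest source--relay distance $r_{\min}$. From the void probability of a disc of radius $r$ for a homogeneous PPP of intensity $\lambda_\phi$ on $\mathcal{D}$,
\[
f_{r_{\min}}(r)=\frac{2\pi\lambda_\phi r\,e^{-\pi\lambda_\phi r^2}}{1-e^{-\pi\lambda_\phi R_{\mathcal{D}}^2}},\qquad 0\le r\le R_{\mathcal{D}},
\]
which integrates to one; in particular $\mathbb{E}[r_{\min}^2]$ is a strictly positive constant smaller than $R_{\mathcal{D}}^2$. By Proposition~\ref{proposition 1} the scheme serves the source through this nearest relay, and since $R_{\mathcal{D}}\ll d$ every relay--destination distance is $\approx d$, so I write $y\triangleq|g|^2/(1+d^2)$ for the selected relay and $x_s\triangleq|h|^2/(1+r_{\min}^2)$, with $x_0,x_s,y$ mutually independent and, given $r_{\min}$, both $|h|^2$ and $|g|^2$ unit-mean exponential.

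Next I would reuse the decomposition \eqref{eqpi}: conditioned on $N\ge1$,
\[
\mathcal{P}^{\mathrm{sel}}=\underbrace{\mathcal{P}\big(x_0+\eta y(x_s-\epsilon)<\epsilon,\ x_s>\epsilon\big)}_{Q}+\mathcal{P}\big(x_0<\epsilon,\ x_s<\epsilon\big).
\]
The second term factorises, because $x_0$ is independent of $(r_{\min},h)$, into $\mathcal{P}(x_0<\epsilon)=1-e^{-(1+d^2)\epsilon}\sim(1+d^2)\epsilon$ times $\mathcal{P}(x_s<\epsilon)=\mathbb{E}_{r_{\min}}\big[1-e^{-(1+r_{\min}^2)\epsilon}\big]\sim\big(1+\mathbb{E}[r_{\min}^2]\big)\epsilon$; hence it is $\Theta(\epsilon^2)$, which already forces the diversity gain to be at most $2$. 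For $Q$ I would condition on $(r_{\min},x_s,y)$, bound $\mathcal{P}\big(x_0<\epsilon-\eta y(x_s-\epsilon)\big)\le(1+d^2)\big(\epsilon-\eta y(x_s-\epsilon)\big)^+$ via $1-e^{-t}\le t$, integrate out $|g|^2$ after splitting its range at the value where $\eta y(x_s-\epsilon)=\epsilon$, and finally integrate out $|h|^2$; the $\int \frac{dx_s}{x_s-\epsilon}$-type contribution from $x_s$ slightly above $\epsilon$ produces the $\epsilon^2\ln(1/\epsilon)$ term familiar from \eqref{theorem approximation}, while the remainder is $O(\epsilon^2)$. Thus $Q=O\big(\epsilon^2\ln(1/\epsilon)\big)$, so $\mathcal{P}^{\mathrm{sel}}$ is $\Theta(\epsilon^2)$ up to this logarithmic factor. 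Since $\epsilon=\tau/\rho$ and $\ln(1/\epsilon)$ is sub-polynomial in $\rho$, this yields $\lim_{\rho\to\infty}\big(-\log\mathcal{P}^{\mathrm{sel}}/\log\rho\big)=2$, i.e.\ the diversity gain is $2$.

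I expect the only real obstacle to be the estimate of $Q$ -- specifically, showing that the near-threshold region, where the harvested power $\propto(x_s-\epsilon)$ is vanishingly small so that even a decoding relay cannot avert outage, contributes at most $\epsilon^2\ln(1/\epsilon)$; the rest is bookkeeping already performed for Theorem~\ref{thorem1}. A shorter but less self-contained argument is also available: by Proposition~\ref{proposition 1} the scheme is outage-optimal at high SNR, so $\mathcal{P}^{\mathrm{sel}}\le\mathcal{P}^{\mathrm{rand}}$ and Corollary~\ref{corollary1} yields ``diversity $\ge2$'' at once, while $\mathcal{P}^{\mathrm{sel}}\ge\mathcal{P}(x_0<\epsilon)\,\mathcal{P}(x_s<\epsilon)=\Theta(\epsilon^2)$ yields ``diversity $\le2$''; combining the two proves the claim.
\end{proof}
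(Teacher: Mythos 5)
Your proposal is correct and follows essentially the same route as the paper: the same nearest-relay distance density $f_{d_{i^*}}(r)=2\zeta\pi\lambda_\phi r\,e^{-\pi\lambda_\phi r^2}$ conditioned on $N\geq 1$, the same decomposition of the outage probability into the cooperative term and the double-decoding-failure term, and the same high-SNR conclusion that the former scales as $\epsilon^2\ln(1/\epsilon)$ while the latter is $\Theta(\epsilon^2)$, yielding diversity gain $2$. The only difference is presentational: the paper evaluates the cooperative term to an explicit asymptotic expression (with constants $b_3$, $b_4$) before taking the limit, whereas you replace that computation with two-sided order bounds (plus an optional sandwich argument via Proposition \ref{proposition 1} and Corollary \ref{corollary1}), which is sufficient for extracting the diversity order.
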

\begin{proof}
See the appendix.
\end{proof}
The exact expression of the outage probability is difficult to obtain, since   the use of the relay closest to the source makes the   density function of the source-relay channel more complicated. Therefore, in Theorem \ref{theorem2} we can   obtain the diversity order only after applying high-SNR asymptotic analysis.  Surprisingly the knowledge of the second order statistics of the channels is not helpful for  improving  the diversity order. However, it is worth pointing out that the use of this CSI knowledge can improve the outage performance compared to the scheme with a randomly chosen relay, as can be seen in Section \ref{section simulation}.

\subsection{Distributed beamforming}
When   global CSI is available at the source and the relays, an optimal strategy for using the available relays is to apply distributed beamforming, analogous  to single-input multiple-output  scenarios in which maximal  ratio combining is optimal. Again we assume that there are $N\geq 1$ relays in $\mathcal{D}$, and $\mathcal{N}=\{1, \ldots, N\}$. 
Denote the  group of   qualified  relays that can decode the source message correctly by $\tilde{\mathcal{S}}$, and the group containing the remaining  relays  by $\tilde{\mathcal{S}}^c$, i.e. $\mathcal{N}=\tilde{\mathcal{S}}\cup \tilde{\mathcal{S}}^c$. 
Note that it is possible that one of the two sets is empty.  The transmission strategy for a qualified relay is as follows. The relay $i$,  $i\in \tilde{\mathcal{S}}$, will transmit $\frac{g_i^*P_{ri}}{\sqrt{\xi(1+c_i^{\alpha})}} s$, where $\xi =\sum_{i\in \tilde{\mathcal{S}}} \frac{|g_i|^2P_{ri}}{{ (1+c_i^{\alpha})}} $. The power normalization factor $\xi$ is to enure that the transmission power of the relay $i$ is always less than $P_{ri}$. With such beamforming, during the second time slot, the destination observes
\begin{eqnarray}
y_D = \left(\sum_{i\in \tilde{\mathcal{S}}} \frac{g_i}{\sqrt{1+c_i^{\alpha}}}\frac{g_i^* {P_{ri}}}{\sqrt{\xi(1+c_i^{\alpha})}}\right) s+n_D.
\end{eqnarray}
By applying MRC over two time slots,   the resulting  SNR at the destination will be
\begin{eqnarray}
SNR_{\tilde{\mathcal{S}}} = \frac{|h_d|^2}{1+d^\alpha}P+\sum_{i\in \tilde{\mathcal{S}}}\eta\frac{|g_i|^2}{ {1+c_i^\alpha}} \left(\frac{|h_i|^2}{ {1+d_i^\alpha}} P - \tau \right).
\end{eqnarray}
 Define $\Pi_n$ as  a set containing all possible partitions yielding distinct pairs of  $\tilde{\mathcal{S}}$ and $\tilde{\mathcal{S}}^c$.  The overall outage probability conditioned on $N\geq 1$ is given by
\begin{align}
\mathcal{P}_{N} &=  \underset{\Pi_n}{\sum}  \mathcal{P}\left( \frac{1}{2}\log (1+SNR_{\tilde{\mathcal{S}}})<R ,  x_{i}>\epsilon, i\in \tilde{\mathcal{S}}, \right. \\ \nonumber &\left. x_{j}<\epsilon, j\in \tilde{\mathcal{S}}^c\right).
\end{align}
In conventional cooperative networks, the SNR at the destination is independent of the source-relay channels, which is no longer the case in the   energy harvesting network considered here.
Define $z_i\triangleq\frac{|g_i|^2}{ {1+c_i^\alpha}} \left(\frac{|h_i|^2}{ {1+d_i^\alpha}} P - \tau \right)$, and the   outage probability of the considered energy harvesting network then  becomes
\begin{align}\nonumber
 \mathcal{P}_{N} &=  \underset{\Pi_n}{\sum} \underset{x_0<\epsilon}{ \mathcal{E}}\underset{Q_5}{\underbrace{\left\{\mathcal{P}\left( \sum^{n}_{i=1}z_{i}<\frac{\tau-x_0P}{\eta}, x_{i}>\epsilon, i\in \tilde{\mathcal{S}} \right)\right\}}}\\ \label{outage 2} &\times \mathcal{P}\left(x_{j}<\epsilon, j\in \tilde{\mathcal{S}}^c\right),
\end{align}
where $\mathcal{E}\{\cdot\}$ denotes the expectation operation.
As can be observed from the proofs for Theorems \ref{thorem1} and \ref{theorem2}, it is quite difficult to find an exact expression of the pdf for $z_i$. The outage probability in \eqref{outage 2} requires not only the pdf of $z_i$ but also the pdf of $ \sum^{n}_{i=1}z_{i}$.  Therefore finding an exact expression for this outage probability is difficult. In order to obtain the diversity order achieved by distributed beamforming, we can first develop   lower and upper bounds on the outage probability in \eqref{outage 2} and then show that the   bounds converge  in the high SNR regime. These two steps will result  in   the following theorem.
\begin{theorem}\label{theorem3}
Assume $\alpha=2$,  $R_{\mathcal{D}}<<d$ and  $N\geq 1$. In an energy harvesting network with randomly deployed relays, the use of distributed beamforming   achieves the maximum diversity gain $(N+1)$.
\end{theorem}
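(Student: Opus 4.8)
The plan is to sandwich the outage probability $\mathcal{P}_N$ in \eqref{outage 2} between two expressions having the same high-SNR exponent, namely $N+1$, and to note that $N+1$ is also the largest exponent that could possibly appear, since there are $N$ relays plus the direct source--destination link, i.e.\ $N+1$ diversity branches. Throughout I use the shorthand $\epsilon=\tau/\rho$ from \eqref{eqpi}, with $\epsilon\to0$ as $\mathrm{SNR}\to\infty$, and the following standard facts: $x_0$ and the pairs $(x_i,y_i)$, $i\in\mathcal{N}$, are mutually independent; $x_0$ and each $x_i$ have densities that are bounded, and bounded away from zero, in a neighbourhood of the origin; and each $y_i$ has a bounded density near the origin (the hypothesis $R_{\mathcal{D}}\ll d$ is what makes the relay--destination gains $y_i$ behave, up to constants, like exponential random variables with a common rate).

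For the lower bound I would keep only the partition $\tilde{\mathcal{S}}=\emptyset$ in \eqref{outage 2}; equivalently, $\mathcal{P}_N\ge\mathcal{P}\bigl(x_0<\epsilon,\ x_i<\epsilon\ \text{for all}\ i\in\mathcal{N}\bigr)$, because on that event no relay can decode and the direct link alone is in outage. By independence this equals $\mathcal{P}(x_0<\epsilon)\prod_{i\in\mathcal{N}}\mathcal{P}(x_i<\epsilon)$, and since each factor is of order $\epsilon$ for small $\epsilon$, we get $\mathcal{P}_N\gtrsim\epsilon^{N+1}$, so the diversity gain is at most $N+1$.

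For the upper bound I would treat each partition $(\tilde{\mathcal{S}},\tilde{\mathcal{S}}^c)\in\Pi_n$ with $|\tilde{\mathcal{S}}|=n$ separately. The factor $\mathcal{P}(x_j<\epsilon,\ j\in\tilde{\mathcal{S}}^c)$ is immediately of order $\epsilon^{N-n}$. For the inner term, the key remark is that on $\{x_i>\epsilon\}$ every summand $z_i$ is nonnegative, so the event $\{\sum_{i\in\tilde{\mathcal{S}}}z_i<v\}$ forces $z_i<v$ for each $i\in\tilde{\mathcal{S}}$; combining this with independence across relays gives $\mathcal{P}\bigl(\sum_{i\in\tilde{\mathcal{S}}}z_i<v,\ x_i>\epsilon\bigr)\le\prod_{i\in\tilde{\mathcal{S}}}\mathcal{P}(z_i<v,\ x_i>\epsilon)$. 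It then remains to establish the single-relay small-ball bound $\mathcal{P}(z_i<v,\ x_i>\epsilon)\lesssim\epsilon\ln(1/\epsilon)$ whenever $v=(\tau-x_0\rho)/\eta$ with $x_0<\epsilon$; after dividing through by $\rho$ this is $\mathcal{P}\bigl(y_i(x_i-\epsilon)<w,\ x_i>\epsilon\bigr)$ with $0<w=(\epsilon-x_0)/\eta\le\epsilon/\eta$, and it can be obtained exactly as the quantity $Q_1$ was handled in the proof of Theorem \ref{thorem1}: condition on $x_i=\epsilon+t$, bound $\mathcal{P}(y_i<w/t)\le\min\{1,\,\mathrm{const}\cdot w/t\}$, use the boundedness and exponential tail of $f_{x_i}$, and integrate in $t$, the region $t\gtrsim w$ being responsible for the logarithmic factor. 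Substituting back, the inner probability is $\lesssim(\epsilon\ln(1/\epsilon))^n$ uniformly in $x_0<\epsilon$; taking the expectation over $x_0<\epsilon$ contributes a further $\mathcal{P}(x_0<\epsilon)\lesssim\epsilon$, and multiplying by the $\tilde{\mathcal{S}}^c$ factor gives $\lesssim\epsilon^{N+1}(\ln(1/\epsilon))^n\le\epsilon^{N+1}(\ln(1/\epsilon))^N$ for this partition. Since $\Pi_n$ is finite, summing over all partitions yields $\mathcal{P}_N\lesssim\epsilon^{N+1}(\ln(1/\epsilon))^N$.

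Putting the two bounds together, $\epsilon^{N+1}\lesssim\mathcal{P}_N\lesssim\epsilon^{N+1}(\ln(1/\epsilon))^N$; since the logarithmic factor does not change the high-SNR exponent (exactly as in Corollary \ref{corollary1}), we conclude $-\lim_{\mathrm{SNR}\to\infty}\log\mathcal{P}_N/\log\mathrm{SNR}=N+1$. The main obstacle I anticipate is the single-relay small-ball estimate $\mathcal{P}(z_i<v,\ x_i>\epsilon)\lesssim\epsilon\ln(1/\epsilon)$ --- getting the logarithmic factor correct and keeping the densities of $x_i$ and $y_i$ under control over the whole range of integration; once that is in place, the reduction of $\sum z_i$ to a product via nonnegativity and the bookkeeping over the $2^N$ partitions are routine.
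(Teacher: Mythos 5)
Your proposal is correct and follows essentially the same route as the paper's proof: the paper likewise bounds $Q_5$ in \eqref{outage 2} by the product of per-relay probabilities (nonnegativity of the $z_i$ on $\{x_i>\epsilon\}$ plus independence), evaluates each factor with the Theorem~\ref{thorem1} machinery to get the $\epsilon\ln(1/\epsilon)$ behaviour, picks up an extra $\epsilon$ from the expectation over $x_0<\epsilon$ and $\epsilon^{N-n}$ from the non-decoding relays, and sums over the finitely many partitions to obtain an upper bound of order $\epsilon^{N+1}\left(\ln(1/\epsilon)\right)^{n}$. The only difference is the converse step: you lower-bound $\mathcal{P}_N$ directly by the all-links-in-outage term $\mathcal{P}(x_0<\epsilon)\prod_{i}\mathcal{P}(x_i<\epsilon)$ of order $\epsilon^{N+1}$, whereas the paper caps the diversity at $N+1$ by comparison with a conventional self-powered cooperative network; both arguments are valid.
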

\begin{proof}
See the appendix.
\end{proof}
As can be observed from Theorem \ref{theorem3}, the full diversity gain can   still be achieved, even though the relaying transmissions are powered by the energy harvested from the relay observations.

\section{Energy Harvesting Cooperative Networks With Multiple Sources}\label{section multipe source}
In this section, we will  consider a more general  cooperative   scenario in which $M$ sources communicate with a common destination via $N$ {\it energy harvesting} relays.  Similar to the previous section, the cooperative transmission consists of two phases. During the first phase, the $M$ sources first broadcast their messages, denoted by $s_m$ for source $m$, via orthogonal channels. During the second phase,  the $N$ relays will form $M$ disjoint groups to help the $M$ sources  via the orthogonal channels.  Distributed beamforming will be carried out among the relays from the same group since it  can achieve the maximum diversity gain as described in the previous section. Therefore a relay in a group to help the $m$-th source, denoted by $\mathcal{S}_m$, will send the following message:
\begin{eqnarray}
s_{R_i,i\in \mathcal{S}_m} = \left\{\begin{array}{ll} \frac{g_i^*\sqrt{P_{mi}}}{|g_i|\sqrt{1+c_{i}^\alpha}} s_m, &if \quad \frac{|h_{mi}|^2}{ {1+d_{mi}^\alpha}} P \geq \tau \\ 0 ,&otherwise \end{array}\right.,
\end{eqnarray}
 where the transmission power of the relay is powered by the energy harvested from its incoming signal, i.e. $P_{mi} =\max\left\{0,\eta \left(\frac{|h_{mi}|^2}{ {1+d_{mi}^\alpha}} P - \tau \right)\right\}$ as shown in \eqref{power1},   $d_{mi}$ is the distance between the $m$-th source and the $i$-th relay and $h_{mi}$ is the corresponding multi-path fading channel coefficient.

For the considered  multi-source scenario, the relays can be viewed as a type of scarce resource, where the sources compete with each other to get help from the relays. Such a competition can be modeled as a coalition formation game.  Particularly, denote by $\mathcal{N}\triangleq \{1, \ldots, N\}$   the set of all relays and similarly by $\mathcal{S}$  the set of all source nodes. Let $\mathcal{S}_m$    be a coalition consisting of the $m$-th source and the relays that are willing to help this source. Therefore $\sum^{M}_{m=1}\bar{\bar{\mathcal{S}_m}}= (N+M)$ and $1\leq \bar{\bar{\mathcal{S}_m}}\leq (N+1)$, where $\bar{\bar{\mathcal{S}_m}}$ denotes the cardinality of $\mathcal{S}_m$. A network partition is defined as $\Pi=\{\mathcal{S}_1, \ldots, \mathcal{S}_M\}$

\subsection{A baseline approach without considering user fairness}
The receive SNR is an important parameter  since it determines the   data rate as well as the reception reliability. Given that the relays in $\mathcal{S}_m$ perform distributed beamforming, the SNR for the $m$-th source message at the destination is given by
\begin{eqnarray}
 SNR_{\mathcal{S}_m}&\triangleq& \frac{P|h_{dm}|^2}{1+d_{0m}^\alpha}+\sum_{i\in \mathcal{S}_m} \frac{P_{mi} |g_i|^2}{ {1+c_i^\alpha}},
\end{eqnarray}
where $d_{0m}$ denotes the distance between the $m$-th source and the destination and $h_{dm}$ denotes the corresponding small scale fading channel.
A straightforward approach to opportunistically use the relays is to define the  payoff function for the $i$-th relay to join   $\mathcal{S}_m$    as follows:
\begin{eqnarray}\label{payoff1}
\phi_i(\mathcal{S}_m) &=& SNR_{\mathcal{S}_m} -  SNR_{\mathcal{S}_m\diagup i} - c(\mathcal{S}_m)\\ \nonumber &=&  \frac{P_{mi} |g_i|^2}{ {1+c_i^\alpha}} - c(\mathcal{S}_m),
\end{eqnarray}
where $SNR_{\mathcal{S}_m\diagup i}$ denotes the SNR achieved by removing the $i$-th relay from the coalition $\mathcal{S}_m$.
The cost, $c(\mathcal{S}_m)$,  is related to the size of the coalition, and in this paper we assume that the cost is proportional to the number of   relays performing distributed beamforming, i.e., $c(\mathcal{S}_m) = \kappa|\tilde{\mathcal{S}}_m|$, where
$\kappa$ is a coefficient to measure the cost to coordinate distributed beamforming and $\tilde{\mathcal{S}}_m\in \mathcal{S}_m$ contains all the relays in $\mathcal{S}_m$ that  can decode $s_m$ correctly.
Therefore  the value of each coalition is given by
\begin{eqnarray}\label{value1}
v(\mathcal{S}_m) = \sum_{i\in \mathcal{S}_m}\phi_i(\mathcal{S}_m).
\end{eqnarray}

The above definitions of the payoff and the coalition value leads to a solution that  does not consider the fairness among the users, as illustrated by the following example. Consider a scenario with $M=2$ sources and  $N=2$ relays, where there is no direct link between the sources and destination, i.e. $h_{dm}=0$. Assume that the first relay has a very good connection to the first source, but no connection to the second source, e.g., $\phi_1(\mathcal{S}_1)=1000 $ and $\phi_1(\mathcal{S}_1)=0 $, where the cost has been ignored.  The channel condition between the second relay and the first source is slightly better than that between the second relay and the second source, e.g., $\phi_2(\mathcal{S}_1)=50 $ and $\phi_2(\mathcal{S}_2)=49 $. The definitions in \eqref{payoff1} and \eqref{value1} imply  that the second relay will join  $\mathcal{S}_1$. However,   the contribution of the second relay in $\mathcal{S}_1$ is insignificant due to the fact that $\phi_1(\mathcal{S}_1)=1000 $, whereas including the second relay in $\mathcal{S}_2$ is important to achieve better fairness among the users.  This observation motivates the following approach which achieves a better tradeoff between the system performance and fairness.

\subsection{A user-fairness coalition formation  approach}
In order to take user fairness into consideration, consider the following alternative definition of the payoff function for the $i$-th relay joining $\mathcal{S}_m$:
\begin{eqnarray}\label{payoff2}
\phi_i(\mathcal{S}_m) =\frac{ SNR_{\mathcal{S}_m} -  SNR_{\mathcal{S}_m\diagup i} }{  SNR_{\mathcal{S}_m} }- c(\mathcal{S}_m).
\end{eqnarray}
And the value of the coalition is $
v(\mathcal{S}_m) = \sum_{i\in \mathcal{S}_m}\phi_i(\mathcal{S}_m)$.
Compared to the definition in \eqref{payoff1}, the one in \eqref{payoff2} can take  the user fairness into consideration, and encourage the relays to help the sources that  need the help more desperately. We consider  the same example as in the previous section, i.e., $\frac{P_{11} |g_1|^2}{ {1+c_1^\alpha}}=10000$,  $\frac{P_{21} |g_1|^2}{ {1+c_1^\alpha}}=0$,  $\frac{P_{12} |g_2|^2}{ {1+c_2^\alpha}}=50$,  $\frac{P_{22} |g_2|^2}{ {1+c_2^\alpha}}=49$,  $ c(\mathcal{S}_m)=0$ and $h_{dm}=0$. In such a case, the first relay always joins  $\mathcal{S}_1$. Based on the definition in \eqref{payoff2}, the payoffs for the second relay to join  $\mathcal{S}_1$ and $\mathcal{S}_2$ are $\phi_2(\mathcal{S}_1) =\frac{50}{1050}$ and $\phi_2(\mathcal{S}_2) =1$, respectively. Therefore the use of the new payoff function in \eqref{payoff2} can ensure that the second relay joins  $\mathcal{S}_2$ and help the second source which is in a critical situation of an outage. Additional  properties of  the proposed fairness approach will be discussed in the following subsection.  Note that there are other possible payoff functions other than the ones shown in \eqref{payoff1} and \eqref{payoff2}. The benefit of using these  two payoff functions is two-fold. One is that these payoff functions are based on  SNRs which are important parameters   directly related to various metrics for performance evaluation, such as the data rate and the outage probability. The other is that these payoff functions are linear functions of the SNRs, and hence can be easily used to analyze the various properties of the addressed games, as shown in the next sub-section.

\subsection{A distributed coalition formation   algorithm }
Based on the definitions  in \eqref{payoff1} and \eqref{payoff2},  one can observe that the payoff of a relay  depends only on the members of the coalition in which this relay is located. Therefore the proposed coalitional game can be modeled as a hedonic coalition formation game, in which the coalition formation process is accomplished by applying preference relations \cite{Bogomo,Saadhedic}. Particulary for any relay $i\in \mathcal{N}$, consider two coalitions $\mathcal{S}_m$ and $\mathcal{S}_n$, where $i\in \mathcal{S}_m$ and $i\in \mathcal{S}_n$. The preference relation, denoted by $\mathcal{S}_m\prec_i \mathcal{S}_n$ , implies that the relay prefers to join $\mathcal{S}_n$ instead of $\mathcal{S}_m$. In according with  \cite{Saadhedic}, we use the following two criteria to determine the preference relation
\begin{eqnarray}\label{criterion relation}
\mathcal{S}_m\prec_i \mathcal{S}_n \Leftrightarrow \left\{\begin{array}{l} \text{C1:}\quad \phi_i(\mathcal{S}_m)< \phi_i(\mathcal{S}_n)\\  \text{C2:}\quad v(\mathcal{S}_m)+v(\mathcal{S}_n\diagdown  \{i\}) \\ \quad \quad \quad <v(\mathcal{S}_m\diagdown \{i\})+v(\mathcal{S}_n)\end{array}\right.,
\end{eqnarray}
where $\mathcal{S}\diagdown \{i\}$ denotes a subset of $\mathcal{S}$ created by removing the node $i$.
The motivation to have the first criterion in \eqref{criterion relation} is that each relay tries to maximize its own individual benefit. And the second criterion in \eqref{criterion relation} is to impose a constraint that the overall network benefit will not be reduced if the relay moves from $\mathcal{S}_m$ to $\mathcal{S}_n$. These two criteria could be conflicting. For example, a relay wants to move from   $\mathcal{S}_m$ to $\mathcal{S}_n$ since its payoff will be increased, but such a move may be blocked since it will reduce the overall network benefit. However, the following proposition demonstrates that for some critical situations, C1 is a sufficient condition for  C2, i.e. a satisfaction  of C1 will lead to a satisfaction of C2.

\begin{proposition}\label{pro2}
Consider a scenario  in which   $\bar{\bar{\mathcal{S}_m}}>2$,  $\bar{\bar{\mathcal{S}_n}}=2$,  $i\in \mathcal{S}_m$, and $i\in \mathcal{S}_n$.   The cost for cooperation is ignored, i.e. $c(\mathcal{S}_m)=0$. If $\phi_i(\mathcal{S}_m)< \phi_i(\mathcal{S}_n)$, then $v(\mathcal{S}_m)+v(\mathcal{S}_n\diagdown  \{i\}) <v(\mathcal{S}_m\diagdown \{i\})+v(\mathcal{S}_n)$ also holds.
\end{proposition}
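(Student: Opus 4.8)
The plan is to reduce criterion C2 to a single scalar inequality and then dominate that inequality by C1. Since the destination SNR of a coalition is additive across its members, write $SNR_{\mathcal{S}_m}=b_m+S_m$ for a coalition formed by source $m$ and a relay set $\mathcal{R}_m$, where $b_m\triangleq\frac{P|h_{dm}|^2}{1+d_{0m}^\alpha}$ is the direct-link term, $a_{mj}\triangleq\frac{P_{mj}|g_j|^2}{1+c_j^\alpha}\ge 0$ is the contribution of relay $j$, and $S_m\triangleq\sum_{j\in\mathcal{R}_m}a_{mj}$. Deleting a relay $j$ just subtracts $a_{mj}$, so with the fairness payoff \eqref{payoff2} and zero cost one gets $\phi_j(\mathcal{S}_m)=\frac{a_{mj}}{b_m+S_m}$; because every relay of the coalition shares the same denominator $SNR_{\mathcal{S}_m}$, summing the relay payoffs yields the compact form $v(\mathcal{S}_m)=\frac{S_m}{b_m+S_m}$. (Any source contribution to $v(\mathcal{S}_m)$ would cancel identically in C2, since all four coalitions appearing there contain a source, so it is harmless to ignore it.) Throughout I assume $SNR_{\mathcal{S}_m\diagdown\{i\}}>0$, which holds generically, so all payoffs and values are well defined.

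Next I evaluate the four coalition values that enter C2 for the configuration in the statement. Because $\bar{\bar{\mathcal{S}_n}}=2$ and $i\in\mathcal{S}_n$, the coalition $\mathcal{S}_n$ consists of source $n$ and relay $i$ only, so $\mathcal{S}_n\diagdown\{i\}$ is a lone-source coalition with $v(\mathcal{S}_n\diagdown\{i\})=0$, while $v(\mathcal{S}_n)=\frac{a_{ni}}{b_n+a_{ni}}=\phi_i(\mathcal{S}_n)$. On the other side, set $S_m'\triangleq S_m-a_{mi}\ge 0$; the hypothesis $\bar{\bar{\mathcal{S}_m}}>2$ guarantees that at least one other relay remains in $\mathcal{S}_m\diagdown\{i\}$, so $S_m'$ is a genuine nonnegative sum and $v(\mathcal{S}_m\diagdown\{i\})=\frac{S_m'}{b_m+S_m'}$, while $v(\mathcal{S}_m)=\frac{S_m}{b_m+S_m}$. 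Substituting these into C2 and cancelling the zero term, C2 is equivalent to
\[
\frac{S_m}{b_m+S_m}-\frac{S_m-a_{mi}}{b_m+S_m-a_{mi}}<\frac{a_{ni}}{b_n+a_{ni}},
\]
and a one-line common-denominator computation collapses the left-hand side to $\dfrac{a_{mi}\,b_m}{(b_m+S_m)(b_m+S_m-a_{mi})}$.

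The final step is the domination argument, which is the only place an inequality is invoked. Since $b_m+S_m-a_{mi}=b_m+S_m'\ge b_m$, the left-hand side above is at most $\dfrac{a_{mi}\,b_m}{(b_m+S_m)\,b_m}=\dfrac{a_{mi}}{b_m+S_m}=\phi_i(\mathcal{S}_m)$, and hypothesis C1 gives $\phi_i(\mathcal{S}_m)<\phi_i(\mathcal{S}_n)=\dfrac{a_{ni}}{b_n+a_{ni}}$, which is precisely the right-hand side; chaining the two inequalities establishes C2. I do not expect a serious obstacle here: everything is elementary once $v(\cdot)$ is written as $\frac{S}{b+S}$. The two points requiring care are the bookkeeping of which members each of the four coalitions contains --- in particular that $\mathcal{S}_n\diagdown\{i\}$ has value zero and that $\mathcal{S}_m\diagdown\{i\}$ still contains a relay because $\bar{\bar{\mathcal{S}_m}}>2$ --- and the observation that $S_m'\ge 0$ is exactly what forces the exact left-hand side of C2 to be no larger than $\phi_i(\mathcal{S}_m)$, at which point C1 finishes the proof.
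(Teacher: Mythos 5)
Your proof is correct and follows essentially the same route as the paper's: you observe $v(\mathcal{S}_n\diagdown\{i\})=0$ and $v(\mathcal{S}_n)=\phi_i(\mathcal{S}_n)$, show that the marginal loss $v(\mathcal{S}_m)-v(\mathcal{S}_m\diagdown\{i\})$ is at most $\phi_i(\mathcal{S}_m)$, and then invoke C1. The only difference is cosmetic — you collapse that marginal loss to $\frac{a_{mi}b_m}{(b_m+S_m)(b_m+S_m-a_{mi})}$ before bounding it, whereas the paper compares the split fractions term by term with a common denominator — so no further comment is needed.
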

\begin{proof}
See the appendix.
\end{proof}

The situation described in Proposition \ref{pro2} is critical, as described in the following. The relay $i$ has a choice to help one of two sources, i.e. sources $m$ and $n$. The condition  $\bar{\bar{\mathcal{S}_n}}=2$ means that the $n$-th source does not get any help except from the relay $i$. The condition  $\phi_i(\mathcal{S}_m)< \phi_i(\mathcal{S}_n)$ means that the relay wants to move from $\mathcal{S}_m$ to $\mathcal{S}_n$, since its payoff will be increased. Such a move is critical to the $n$-th source since its SNR will be zero if  such a move is rejected. Proposition \ref{pro2} illustrates that such a move will be guaranteed.

By using the conditions in \eqref{criterion relation}, a distributed coalition formation algorithm can be described as follows. During the initialization phase, the relays are randomly assigned to the $M$ sources. During the iteration phase, each relay takes its turn to determine  whether to stay in the same coalition or join  a new coalition based on the criteria in \eqref{criterion relation}.
Compared to Eq. (12) in \cite{Saadhedic}, the conditions for the preference relation in \eqref{criterion relation} are weaker in the sense that assigning the relay $i$ to a coalition will reduce the payoffs of other players in the same coalition. But the conditions in \eqref{criterion relation} are sufficient to ensure the convergence of the proposed algorithm, which can be shown by using the criterion of  Nash-stability. Recall that a partition $\Pi=\{\mathcal{S}_1, \ldots, \mathcal{S}_M\}$ is Nash stable if $\forall i \in \mathcal{N}$ s.t. $i\in \mathcal{S}_m$, $(\mathcal{S}_m , \Pi) \succ_i (\mathcal{S}_k\cup\{i\}, \Pi')$ for any $\mathcal{S}_k$ with $\Pi'=(\Pi\diagdown \{\mathcal{S}_m,\mathcal{S}_k\}\cup \{\mathcal{S}_m\diagdown\{i\}\})$. The following proposition can demonstrate the convergence of the proposed scheme.
\begin{proposition}\label{converge}
Starting from any initial partition, the proposed coalition formation algorithm always converges to a final network partition, and this final partition is Nash stable.
\end{proposition}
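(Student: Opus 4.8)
The plan is to exhibit a strictly monotone potential function on the finite collection of feasible network partitions, use it to force the algorithm to halt after finitely many steps, and then identify the terminal partition with a Nash-stable one.

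\textbf{Step 1: a monotone social-welfare potential.} To every partition $\Pi=\{\mathcal{S}_1,\ldots,\mathcal{S}_M\}$ I would associate the quantity $W(\Pi)\triangleq\sum_{m=1}^{M}v(\mathcal{S}_m)$, with $v(\cdot)$ the coalition value from \eqref{value1}. The key observation is that a single iteration step moves one relay $i$ out of $\mathcal{S}_m$ and into $\mathcal{S}_n$, leaving every other coalition in the partition untouched; since $v(\mathcal{S}_k)$ depends only on the members of $\mathcal{S}_k$, the change of $W$ caused by the move equals exactly $\big(v(\mathcal{S}_m\diagdown\{i\})+v(\mathcal{S}_n)\big)-\big(v(\mathcal{S}_m)+v(\mathcal{S}_n\diagdown\{i\})\big)$ in the notation of \eqref{criterion relation}. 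But the relay performs the move only when $\mathcal{S}_m\prec_i\mathcal{S}_n$, and criterion C2 in \eqref{criterion relation} is precisely the statement that this difference is strictly positive. Hence every move executed by the algorithm strictly increases $W$.

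\textbf{Step 2: termination.} Because the $N$ relays are distinguishable and each is assigned to exactly one of the $M$ sources, the set of feasible partitions is finite (of cardinality at most $M^N$), so $W$ takes only finitely many real values. A walk through partition space along which $W$ strictly increases at every step is therefore necessarily finite; in particular no partition is ever revisited, which also excludes cycling under the round-robin update order. Consequently, after finitely many iterations the algorithm reaches a partition $\Pi^\star$ at which no relay has an admissible move. At that point, for every relay $i$ with $i\in\mathcal{S}_m$ and every coalition $\mathcal{S}_k$, the move of $i$ into $\mathcal{S}_k\cup\{i\}$ is not admissible, i.e.\ $\mathcal{S}_m\not\prec_i(\mathcal{S}_k\cup\{i\})$; by the definition of the preference relation through C1 and C2 this is exactly $(\mathcal{S}_m,\Pi^\star)\succ_i(\mathcal{S}_k\cup\{i\},\Pi')$ for the corresponding $\Pi'$, which is the Nash-stability condition recalled just before the proposition, so $\Pi^\star$ is Nash stable.

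The step I expect to require the most care is the bookkeeping in Step 1. The cost terms $c(\mathcal{S}_m)=\kappa|\tilde{\mathcal{S}}_m|$ depend on coalition cardinality and therefore also change when $i$ migrates, and the payoffs $\phi_j(\cdot)$ of the other relays in the two affected coalitions change as well; one must verify that all of these contributions are already folded into $v(\cdot)$ via \eqref{value1}, \eqref{payoff1}–\eqref{payoff2}, so that C2 genuinely captures the \emph{net} change of the full social welfare $W$ and not merely of its SNR part. Once this identification of C2 with the increment of $W$ is made explicit, the finiteness argument of Step 2 and the translation into Nash stability are routine.
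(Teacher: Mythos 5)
Your argument is correct and takes essentially the same route as the paper: the paper's own proof also rests on observing that criterion C2 in \eqref{criterion relation} makes the overall network benefit $W(\Pi)=\sum_m v(\mathcal{S}_m)$ monotone across iterations, and then defers the convergence and Nash-stability conclusion to the cited hedonic coalition-formation reference. You merely make that delegated step explicit (strict increase of $W$, finiteness of the at most $M^N$ partitions, and the identification of the terminal partition with a Nash-stable one), which is a self-contained rendering of the same idea.
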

\begin{proof}
The key step for the proof is to observe that  the overall network benefit is always non-decreasing after each iteration, because of the second criterion in \eqref{criterion relation}. With this observation and following similar steps as in \cite{Saadhedic}, the proposition can be proved.
\end{proof}

\section{Numerical Results}\label{section simulation}
To demonstrate the performance of the proposed energy harvesting cooperative schemes,  we present some numerical  studies and evaluate the developed analytical results   in two different scenarios.  The numerical results are obtained by carrying out Monte Carlo simulations, and the number of simulation runs for all figures except Fig. \ref{fig_selection} is $10^{5}$, whereas $10^{8}$ simulation runs are used for Fig. \ref{fig_selection}.
\subsection{Energy harvesting cooperative networks with one source}
In Fig. \ref{fig_selection}, the performance of the energy harvesting transmission scheme with a randomly chosen relay is depicted. Particularly the targeted data rate is $R=0.1$ bit per channel use (BPCU), the energy harvesting efficiency coefficient is $\eta=0.5$, the path loss exponent is $\alpha=2$  and the radius of $\mathcal{D}$ is $R_{\mathcal{D}}=1.5$m. The number of relays in $\mathcal{D}$ is Poisson distributed with the parameter $\lambda_\phi=1$. As can be observed from the figure, the analytical results developed in Theorem \ref{thorem1} are very close to the simulations, particularly at high SNR. In addition, the performance of a non-cooperative direct transmission scheme has also been shown. As can be seen from Fig. \ref{fig_selection}, the use of energy harvesting relays is helpful to improve the reception reliability at the destination. Particularly the slope of the outage probability curve for the cooperative scheme is larger than that of the non-cooperative one, which means that a larger diversity gain can be achieved by the cooperative scheme.

\begin{figure}[!htbp]\centering
    \epsfig{file=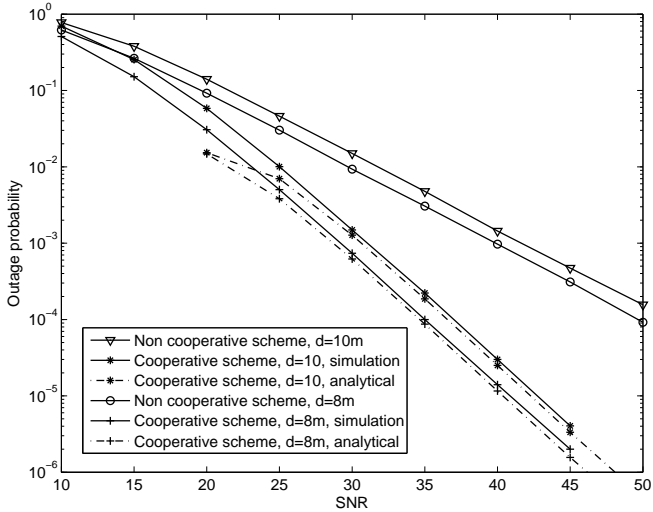, width=0.48\textwidth, clip=}
\caption{ Outage probability versus SNR with a randomly chosen relay. The targeted data rate is $R=0.1$ BPCU. The energy harvesting efficiency is $\eta=0.5$, the radius of $\mathcal{D}$ is $R_{\mathcal{D}}=1.5$m, and the node density is $\lambda_\phi=1$. Solid lines are for simulation results and dashed lines are for the analytical  results developed in Theorem 1.  }\label{fig_selection}\vspace{-1em}
\end{figure}

\begin{figure}[!htp]
\begin{center} \subfigure[ $\eta=0.5$]{\includegraphics[width=0.48\textwidth]{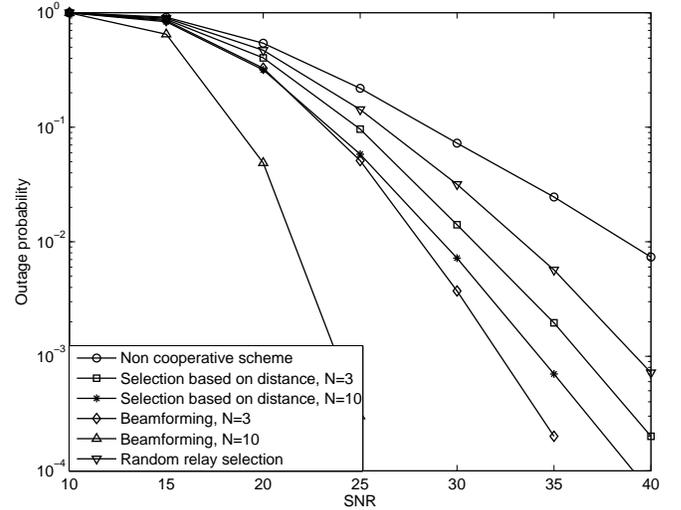}}
\subfigure[$\eta=1$ ]{\label{fig set comparison
b2}\includegraphics[width=0.48\textwidth]{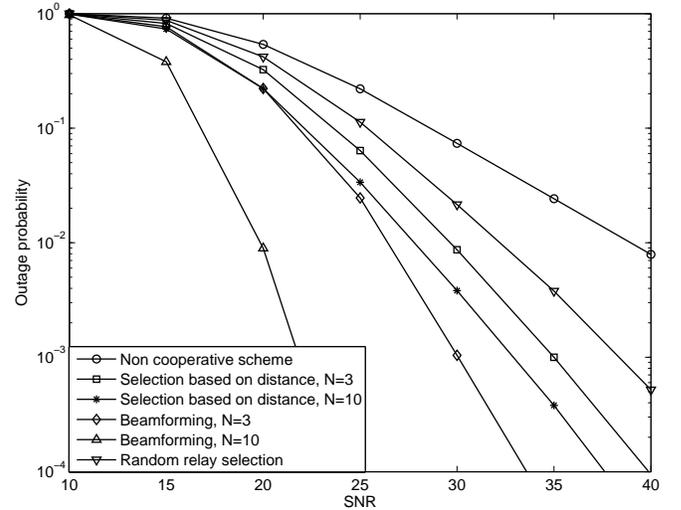}}
\end{center}\vspace{-1em}
 \caption{ Outage probability achieved by   different strategies of using the available relays.  $R=1$ BPCU, $R_{\mathcal{D}}=2.5$m and $d=5$m. All the curves are based on simulation results.  }\label{fig water}\vspace{-1em}
\end{figure}

In Fig. \ref{fig water}, the three different strategies   using the available relays are compared, when   the targeted data rate is $R=1$ BPCU, the source-destination distance is $5$m,  the radius of $\mathcal{D}$ is $R_{\mathcal{D}}=2.5$m and the other parameters are the same as in Fig. \ref{fig_selection}. Compared to the setup used in Fig. \ref{fig_selection}, the ratio between the source-destination distance and the radius of $\mathcal{D}$  is reduced in order to examine the performance of the proposed relaying schemes for the case other than  $R_{\mathcal{D}}<<d$.  The impact of   different numbers of   relays in $\mathcal{D}$ on the outage performance is shown in the two figures. Theorem \ref{theorem2} states that when only the second order statistics of the channels are known, the achievable diversity gain is $2$, no matter how many relays there are in $\mathcal{D}$.  In Fig. \ref{fig water}, it is clear that the slopes  of the outage curves for the distance based scheme with different numbers of relays are the same, which confirms Theorem \ref{theorem2}. However,  it is worth pointing out that the use of the second order statistics of the channels can still yield an outage performance gain compared to the case with a randomly chosen relay.  On the other hand, the use of distributed beamforming can ensure that the achievable diversity gain is proportional to $N$,  as can be observed from the figures. Such an observation   confirms the analytical results developed in Theorem \ref{theorem3}.

The developed analytical results for the diversity gains shown in  Theorem \ref{thorem1}, \ref{theorem2} and \ref{theorem3}   are based on the assumption $\alpha=2$, and in Fig. \ref{fig_alpha}, we use computer simulations to demonstrate the impact of the path loss exponent on the outage performance. As can be seen from the figure, by increasing the path loss exponent, the outage performance achieved by all the relaying protocols is degraded. However, an important observation is that the slope of the outage probability curves stays the same. Take the beamforming scheme as an example. By increasing the value of $\alpha$, the outage probability curve is shifted to the right, and its slope stays the same. Therefore, Fig. \ref{fig_alpha} has demonstrated that our developed diversity results are most likely valid even if $\alpha>2$, although we still do not have a formal proof of this.

\begin{figure}[!htbp]\centering
    \epsfig{file=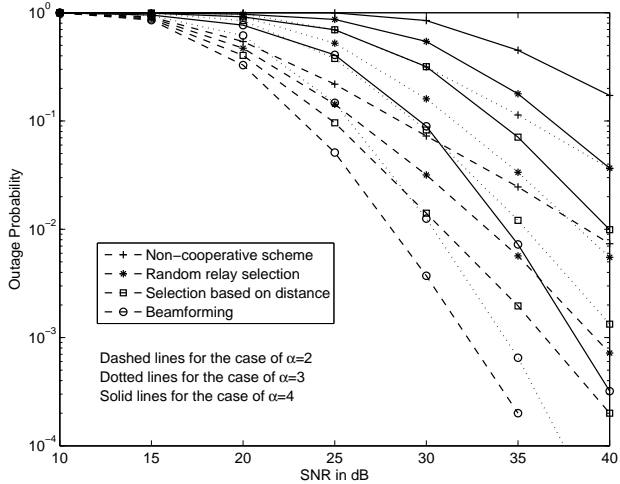, width=0.48\textwidth, clip=}
\caption{ The impact of the path loss exponent  $\alpha$ on the outage performance of the energy harvesting relaying protocols.  $R=1$ BPCU. $R_{\mathcal{D}}=2.5$m and $d=5$m.  All the curves are based on simulation results. }\label{fig_alpha}\vspace{-1em}
\end{figure}
\subsection{Energy harvesting cooperative networks with multiple sources}
In this subsection the energy harvesting scenario with multiple source nodes will be considered, and the performance of the proposed coalition formation algorithms will be evaluated. The relays are randomly located inside the disc $\mathcal{D}$.  The radius of $\mathcal{D}$ is $R_{\mathcal{D}}=5$m and the origin of $\mathcal{D}$ is located at $(R_{\mathcal{D}},R_{\mathcal{D}})$. We focus on the case with $4$ sources, which are located at $(\frac{1}{2}R_{\mathcal{D}},R_{\mathcal{D}})$, $(R_{\mathcal{D}},\frac{1}{2}R_{\mathcal{D}})$, $(\frac{3}{2}R_{\mathcal{D}},R_{\mathcal{D}})$ and $( R_{\mathcal{D}},\frac{3}{2}R_{\mathcal{D}})$, respectively. The common destination is located at $(R_{\mathcal{D}}+d,R_{\mathcal{D}})$ and $d=10$m.  In Fig. \ref{fig water1}, we evaluate the performance of two coalition formation schemes based on different payoff functions defined  in \eqref{payoff1} and \eqref{payoff2}, termed the Baseline Scheme and the Proposed Scheme, respectively. The energy harvesting efficiency is $\eta=1$ and the cost coefficient is $\kappa=0.001$.  Different choices of the number of relays and the targeted data rate are also shown in the figures. As can be observed from the figure, by increasing the targeted data rate, the outage performance achieved by both coalition formation algorithms is reduced.  When increasing the number of relays, i.e. increasing the value of $N$,  there are more relays to help source transmissions, and therefore one can expect that the outage performance achieved by both schemes should be improved, which can be confirmed by the two figures.

Another observation from the two figures is that the payoff function in \eqref{payoff2} can yield  better outage performance compared to the one based on \eqref{payoff1}. The reason for such a performance gain is that the payoff function in \eqref{payoff2}  can efficiently capture how significantly a relay   contributes to a coalition. Particularly the payoff function in \eqref{payoff2} evaluates the ratio between the SNR gain that a relay can bring to a coalition and the overall SNR achieved by the coalition. In the case that the overall SNR of one  coalition is already large enough, the payoff function in \eqref{payoff2} encourages the relay to find an alternative coalition in which this relay's help is more significant. As a result, such a payoff function can ensure  balanced relay allocation among the sources. As discussed in Proposition \ref{pro2}, the proposed coalition formation algorithm based on \eqref{payoff2} can avoid the unfair  situations in which all relays join   one coalition, while other coalitions  do not get any help from the relays.

\begin{figure}[!htp]
\begin{center} \subfigure[ $N=8$]{\includegraphics[width=0.48\textwidth]{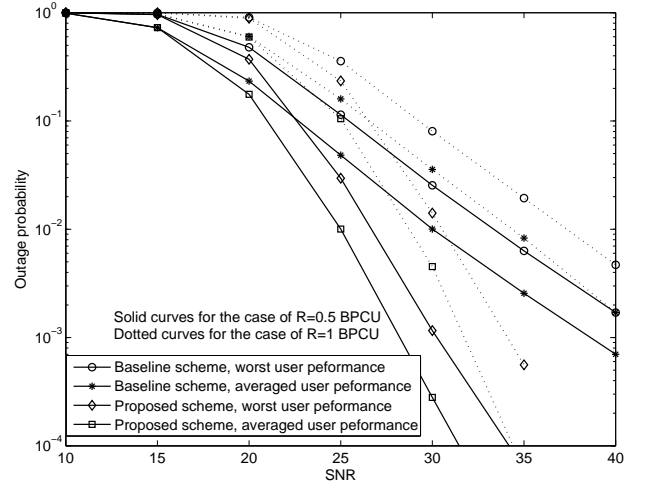}}
\subfigure[$N=4$ ]{\label{fig set comparison
b2}\includegraphics[width=0.48\textwidth]{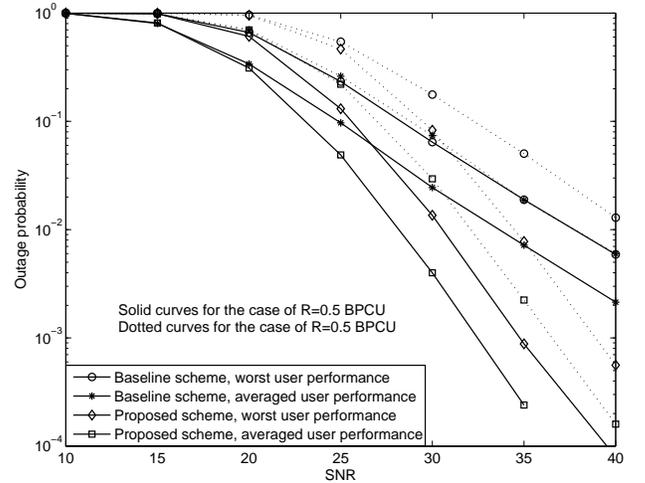}}
\end{center}\vspace{-1em}
 \caption{ Outage probability achieved by the proposed coalition formation algorithms. The cost for coordinating node cooperation is  $\kappa=0.001$, $R_{\mathcal{D}}=5$m and $d=10$m. All the curves are based on simulation results.  }\label{fig water1}\vspace{-1em}
\end{figure}

%

In Fig. \ref{fig_converge}, the convergence of the proposed coalition formation algorithm is studied, where the targeted data rate is set as $R=1$ BPCU, $N=8$ and the other parameters are   the same as in Fig. \ref{fig water1}. As can be seen from the figure, the proposed coalition formation algorithm can converge quickly, which is important to reduce the delay and computational complexity of the energy harvesting cooperative system. In Fig. \ref{fig_cost}, the impact of different choices of the cost coefficient $\kappa$ on the outage performance is shown. As can be observed from the figure, increasing the value of the cost coefficient will reduce the outage probability, which can be explained as follows. By increasing the value of $\kappa$, the coordination of the same coalition will result in more system overhead, which implies that each source prefers to reduce the coalition size.  As a result, each source will have fewer relays to help its transmission, which causes the degradation of the outage performance.  In Fig. \ref{fig_cost2}, the outage probability is shown as a function of $\kappa$, where the choices of $\kappa$ are $\begin{bmatrix} 0.001 &0.005 &0.01 &0.05 &0.1 &0.5 &1  \end{bmatrix}$. Again we observe that the outage probability will be increased by enlarging $\kappa$, similar to the observations from  Fig. \ref{fig_cost}. An interesting observation is that the outage probability is   sensitive to the choice of $\kappa$ when $\kappa$ is of the order of $0.001$, whereas the outage probability curves are quite flat for $\kappa$ of the order of $0.01$.
\begin{figure}[!htbp]\centering
    \epsfig{file=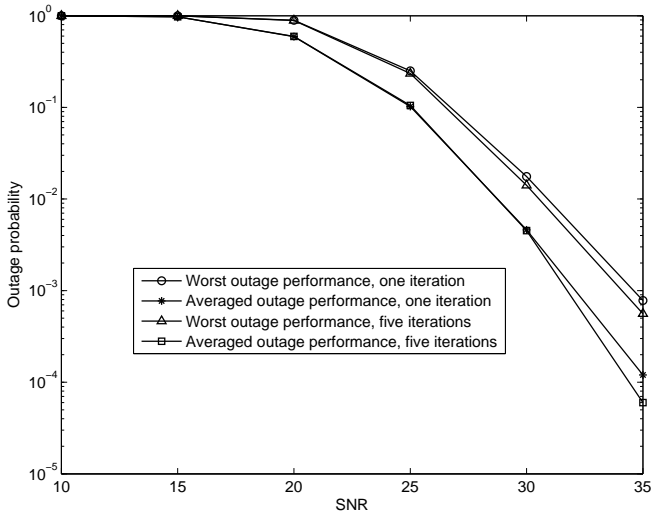, width=0.48\textwidth, clip=}
\caption{ The convergence of the proposed coalition formation algorithm. $R=1$ BPCU. $\kappa=0.001$ and $N=8$. All the curves are based on simulation results. }\label{fig_converge}\vspace{-1em}
\end{figure}

\begin{figure}[!htbp]\centering
    \epsfig{file=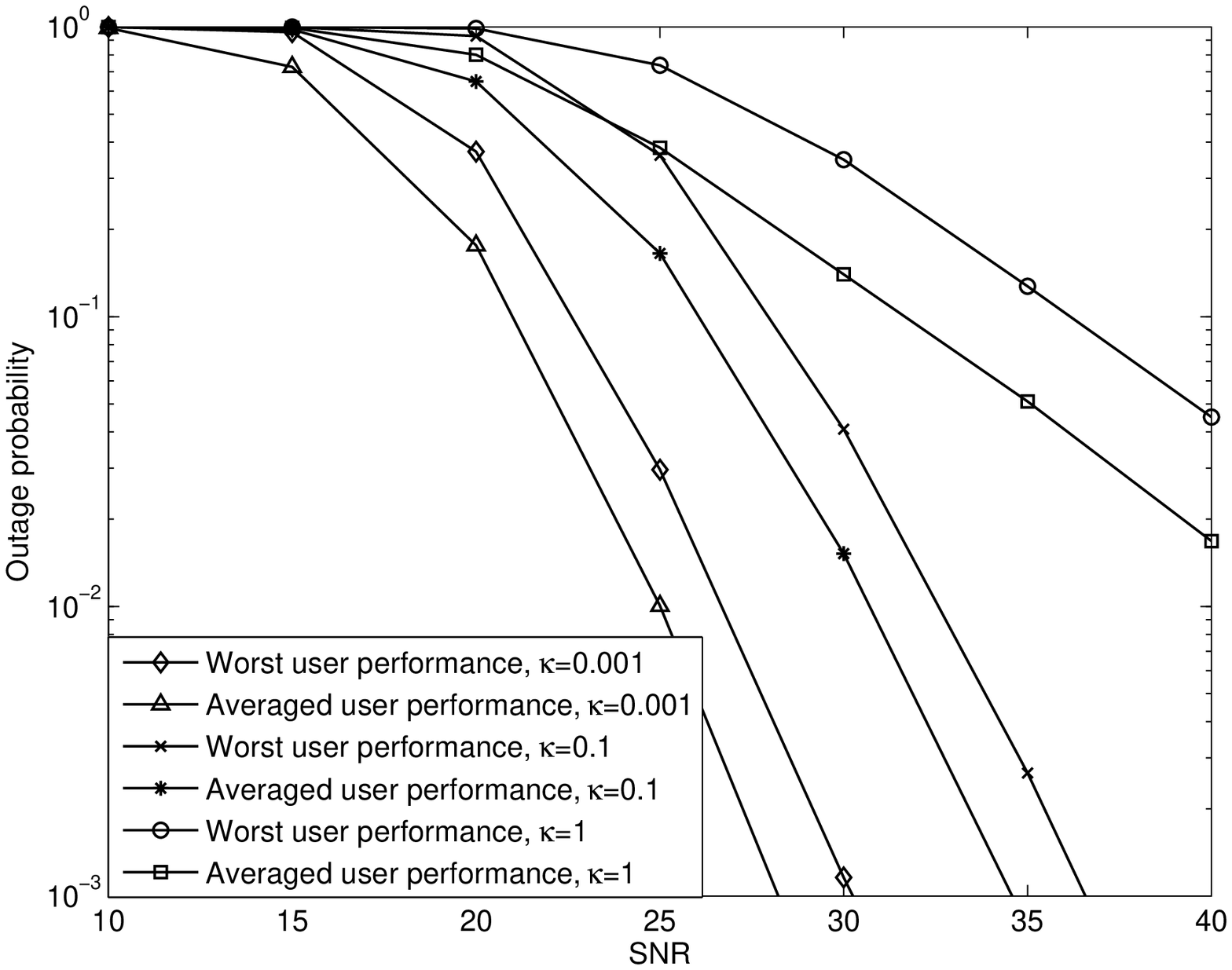, width=0.48\textwidth, clip=}
\caption{ The impact of the cost coefficient $\kappa$ on the performance achieved by the proposed coalition formation algorithm. $R=1$ BPCU and $N=8$.  All the curves are based on simulation results. }\label{fig_cost}\vspace{-1em}
\end{figure}

\begin{figure}[!htbp]\centering
    \epsfig{file=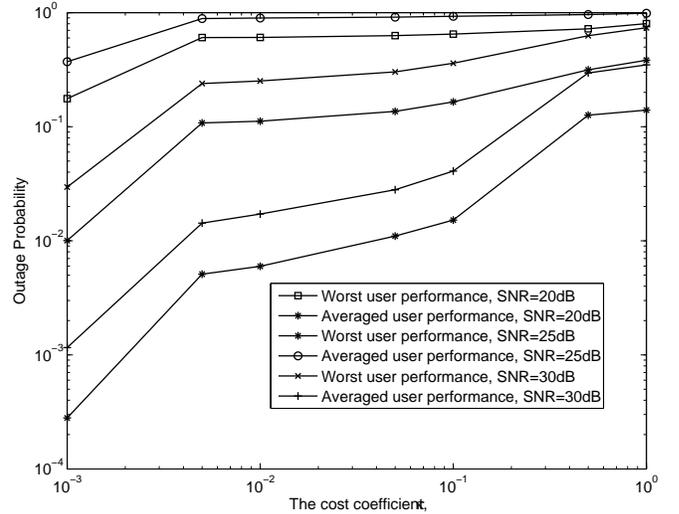, width=0.48\textwidth, clip=}
\caption{ The impact of the cost coefficient $\kappa$ on the performance achieved by  proposed coalition formation algorithm. $R=1$ BPCU and $N=8$.  All the curves are based on simulation results. }\label{fig_cost2}\vspace{-1em}
\end{figure}
%

\section{Conclusion}\label{conclusion}
 In this paper, we have considered the application of wireless information and power transfer to cooperative networks with spatially random relays. When there is a single  source-destination pair in the network, we have proposed three different strategies to use the available relays, and their impact on the outage probability and diversity gain has been characterized by applying stochastic geometry.   When there are multiple sources, relay allocation among the users has been modeled as a coalition formation game, and two distributed game theoretic algorithms have been  developed based on different payoff functions.    Most of the analytical results developed in this paper rely on the assumption that the relays are located far away from the destination, where a promising future direction is to characterize the outage performance for cooperative networks in a more general topology.  A relevant  topic of interest  is whether it is more   beneficial to use the relays deployed close to the source, compared to the ones  close to the destination. A useful observation is that  the source-relay channel     not only affects the reliability of the transmission from the source to the relay, but also plays an important role for the transmission from the relay to the destination, since the relay transmission power is determined by the source-relay channel condition. In addition, it is promising to study the use of amplify-and-forward (AF) strategies at the relays. The use of AF will significantly complicate the determination of $\theta_i$. Thus, it will be important   to find the optimal choice of $\theta_i$ in this case  and also to understand the corresponding outage performance.

 Note that this paper has considered  the strategy of simultaneous information and power transfer, and this strategy can  be more energy efficient compared to the decoupled strategy, in which some sources  send only information and the other sources are placed only to  deliver  energy. Take the cooperative network described in Section \ref{section one source} as an example, i.e. one source-destination pair with multiple relays. When the source broadcasts its message, some relays can first use   part of their observations for decoding, and then carry out energy harvesting by using the rest after successful decoding. As a result, the energy broadcast by the source can be fully used. On the other hand, the energy efficiency of the system will be reduced if we consider the decoupled strategy, since the potential energy remaining  in the relay observations cannot be used efficiently.

 \bibliographystyle{IEEEtran}
\bibliography{IEEEfull,trasfer}

\appendix
\textsl{Proof of Theorem \ref{thorem1} :}
Recall that the $N$ relays are deployed in $\mathcal{D}$ according to a homogeneous Poisson point process.  Therefore the $N$ relays can be modeled as a set of independent and identically distributed points in the disc $\mathcal{D}$, denoted by $W_i$. The point $W_i$ contains the location information about the  relay, from  which the source-relay and relay-destination distances can be calculated. And the probability density function (pdf) of each point $W_i$ is given by \cite{Kingmanbook,Wangpoor11}
\[
p_{W_i}(w_i) = \frac{\lambda_{\phi}}{\mu(\mathcal{D})} = \frac{1}{\pi R_{\mathcal{D}}^2}.
\]
By using this density function, the probability in \eqref{eqpi},  $Q_1$, can be re-written as
\begin{align}\label{q1}
Q_1 &=  \mathcal{P}\left(x_0+ \eta y_i \left( x_i - \epsilon \right) < \epsilon, x_i>\epsilon\right)
\\\nonumber &=  \mathcal{P}\left(  y_i   < \frac{\epsilon-x_0}{\eta(x_i - \epsilon)}, x_i>\epsilon\right) \\ \nonumber &= \int_{0}^{\epsilon}\underset{\mathcal{D}}{\int}\underset{Q_2}{\underbrace{\int^{\infty}_{\epsilon }\left(1-e^{-(1+c_i^\alpha)\frac{\epsilon-x_0}{\eta(t-\epsilon)} }\right)  (1+d_i^\alpha)e^{- (1+d_i^\alpha)t}dt}}\\ \nonumber &\times  p_{W_i}(w_i)dw_if_{x_0}(x_0)dx_0,
\end{align}
where   $f_{x_0}(x_0)$ is the pdf of $x_0$.  Since $x_0=\frac{|h_d|^2}{1+d^\alpha}$ and $h_d$ is assumed to be complex Gaussian distributed, we have $f_{x_0}(x_0)=(1+d^\alpha)e^{-(1+d^\alpha)x_0}$.  Note that the parameter $d_i$ is determined by the location of the relay, i.e. $W_i$, as shown later.

To obtain a closed-form expression, the integral $Q_2$ can be first re-written as follows:
\begin{eqnarray}
Q_2  &=& e^{-(1+d_i^\alpha)\epsilon}-\int^{\infty}_{0}e^{-\frac{(\epsilon-x_0)(1+c_i^\alpha)}{\eta z} }  \\ \nonumber && \times (1+d_i^\alpha)e^{- (1+d_i^\alpha)(z+\epsilon)}dz \\ \nonumber&=&
 e^{-(1+d_i^\alpha)\epsilon}-(1+d_i^\alpha) e^{- (1+d_i^\alpha)\epsilon}  \\ \nonumber && \times \int^{\infty}_{0}e^{-\frac{(\epsilon-x_0)(1+c_i^\alpha)}{\eta z} }e^{- (1+d_i^\alpha)z }dz.
\end{eqnarray}
Now applying [Eq.3,324] in \cite{GRADSHTEYN}, the integral can be expressed as follows:
\begin{eqnarray}
Q_2  &=&\nonumber
 e^{-(1+d_i^\alpha)\epsilon}\left(1-2 \left(\sqrt{\frac{(1+c_i^\alpha)(1+d_i^\alpha)(\epsilon-x_0)}{\eta}}\right)\right. \\   &&\left.\times \mathrm{K}_1\left(2\sqrt{\frac{(1+c_i^\alpha)(1+d_i^\alpha)(\epsilon-x_0)}{\eta}}\right) \right),
\end{eqnarray}
where $\mathrm{K}_n(\cdot)$ denotes the modified Bessel function of the second kind.  Given $d_i$, the source-relay distance, $d$, the source-destination distance and the angle $\angle R_iSD$, denoted by $\theta_i$, the relay-destination distance is given by
\[
c_i^2 = d_i^2+d^2-2 d_id \cos(\theta_i).
\]
By using the above relationship, the probability $Q_1$ can be written as follows:
\begin{align}
Q_1  &=\nonumber
\int_{0}^{\epsilon}\underset{\mathcal{D}}{\int}e^{-(1+d_i^\alpha)\epsilon}\left(1-2 q(d_i, \theta_i) \mathrm{K}_1\left(2q(d_i, \theta_i)\right) \right)\\ \nonumber &\times p_{W_i}(w_i)dw_if_{x_0}(x_0)dx_0.\nonumber
\end{align}
where $q(d_i, \theta_i)$ is defined as $q(d_i, \theta_i)\triangleq \sqrt{\frac{(1+\left(d_i^2+d^2-2 d_id \cos(\theta_i) \right)^\frac{\alpha}{2})(1+d_i^\alpha)(\epsilon-x_0)}{\eta}}$.
By using the homogenous Poisson point process assumption and converting to polar coordinates,   we have
\begin{align}
Q_1  &=\label{eq3} \frac{1}{\pi R_{\mathcal{D}}^2}
\int_{0}^{\epsilon}\int^{R_{\mathcal{D}}}_{0}\int^{2\pi}_{0}e^{-(1+r^\alpha)\epsilon}\left(1-2 q(r, \theta) \right.\\ \nonumber &\times \left.\mathrm{K}_1\left(2q(r, \theta)\right) \right)rdrd\theta f_{x_0}(x_0) dx_0.
\end{align}

On the other hand, the probability of the event that both source-relay and source-destination channels are   poor   is given by
\begin{align}\label{eq2}
&\mathcal{P}(x_0<\epsilon, x_i<\epsilon)\\\nonumber &=\underset{\mathcal{D}}{\int}\left(1-e^{-(1+d^\alpha)\epsilon}\right)\left(1-e^{-(1+d_i^\alpha)\epsilon}\right)p_{W_i}(w_i)dw_i\\ \nonumber
&=\frac{2}{ R_{\mathcal{D}}^2}\int_0^{R_{\mathcal{D}}}\left(1-e^{-(1+d^\alpha)\epsilon}\right)\left(1-e^{-(1+r^\alpha)\epsilon}\right)rdr,
\end{align}
from which the third probability in \eqref{eqpi} can be obtained.
Recall that from the Poisson distribution, we have
\[
P(N=0) = e^{-\pi R^2\lambda_{\phi}}.
\]
Combining the above equation with \eqref{eqpi}, \eqref{eq3} and \eqref{eq2}, the first part of the theorem is proved.

To obtain the high SNR approximation, we first recall that the use of the series representation of Bessel functions yields the following \cite{GRADSHTEYN}:
\[
x\mathrm{K}_1(x) \approx 1 +\frac{x^2}{2} \left(\ln \frac{x}{2}+c_0\right),
\]
where $c_0=-\frac{\psi(1)}{2}-\frac{\psi(2)}{2}$.
An important observation is that provided the transmission power is sufficient large,  $q(d_i, \theta_i)\rightarrow 0$. By using this observation, the integral $Q_2$ can be approximated as follows:
\begin{eqnarray}
Q_2  &=&\nonumber
e^{-(1+d_i^\alpha)\epsilon}- e^{- (1+d_i^\alpha)\epsilon} 2q(d_i, \theta_i)\mathrm{K}_1\left(2q(d_i, \theta_i)\right)  \\ \nonumber &\approx&
e^{-(1+d_i^\alpha)\epsilon}- e^{- (1+d_i^\alpha)\epsilon}\left(1+2 q^2(d_i, \theta_i)\right.\\ \nonumber &&\times\left.\left[\ln q(d_i, \theta_i) +c_0\right]\right)\\ \nonumber &=&
- e^{- (1+d_i^\alpha)\epsilon}\left(2 q^2(d_i, \theta_i) \left[\ln q(d_i, \theta_i) +c_0\right]\right)
 \\   &\approx&
- 2 q^2(d_i, \theta_i)\left[\ln q(d_i, \theta_i) +c_0\right],
\end{eqnarray}
where the last step is obtained by applying $\epsilon\rightarrow  0$. Consequently, a high-SNR approximation for the probability $Q_1$ is given by
\begin{align}
Q_1  &\approx
- \frac{2}{\pi R_{\mathcal{D}}^2}\int_{0}^{\epsilon}\int^{R_{\mathcal{D}}}_{0}\int^{2\pi}_{0}
 q^2(r, \theta)\left[\ln q(r, \theta) +c_0\right]  \\ \nonumber &\times rdrd\theta f_{x_0}(x_0) dx_0.\nonumber
\end{align}

When $R_{\mathcal{D}}<< d$, all the relay-destination distances are approximately the same as the source-destination distance, i.e.  $c_i\approx d$, $\forall i \in\{1, \ldots, N\}$.  With this approximation, we have
$$q(d_i, \theta_i)\approx \sqrt{\frac{(1+d^\alpha)(1+d_i^\alpha)(\epsilon-x_0)}{\eta}}.$$
When the transmission power becomes infinite, $\epsilon$ goes to zero, so as $\epsilon-x_0$, since $x_0\leq \epsilon$. As a result, the factor $q(d_i, \theta_i)$ goes to zero when the transmission power becomes infinite.
By using this approximation, the probability $Q_1$ in \eqref{eqpi} can now approximated as in \eqref{q1eq},
\begin{figure*}
\begin{eqnarray}
Q_1  &\approx&\label{q1eq}
- \frac{2}{\pi R_{\mathcal{D}}^2}\int_{0}^{\epsilon}\int^{R_{\mathcal{D}}}_{0}\int^{2\pi}_{0}
b_0(1+r^\alpha)\left[\frac{1}{2}\ln (b_0(1+r^\alpha)) +c_0\right]  rdrd\theta f_{x_0}(x_0) dx_0
\\ \nonumber  &=&\nonumber
- \frac{2}{  R_{\mathcal{D}}^2}\int_{0}^{\epsilon}\int^{R_{\mathcal{D}}^2}_{0}
b_0(1+z)\left[\frac{1}{2}\ln (b_0(1+z)) +c_0\right]  dzf_{x_0}(x_0) dx_0\\ \nonumber  &=&\nonumber
- \frac{2}{  R_{\mathcal{D}}^2}\int_{0}^{\epsilon}\frac{1}{b_0}\int^{b_0(1+R_{\mathcal{D}}^2)}_{b_0}
t\left[\frac{1}{2}\ln t +c_0\right]  dtf_{x_0}(x_0) dx_0,
\end{eqnarray}
\end{figure*}
where  $b_0=\frac{(1+d^2)(\epsilon-x_0)}{\eta}$  and the second equation follows from the assumption $\alpha=2$. After some algebraic manipulations, the probability $Q_1$ can be approximated as in \eqref{q12}.
\begin{figure*}
\begin{eqnarray}
Q_1  &\approx&\nonumber
- \frac{2}{  R_{\mathcal{D}}^2}\int_{0}^{\epsilon}\frac{1}{b_0}
\left[\frac{1}{4}\left(b_0^2(1+R_{\mathcal{D}}^2)^2\ln(b_0(1+R_{\mathcal{D}}^2)) - b_0^2\ln b_0 -\frac{1}{2}b_0^2R_{\mathcal{D}}^2(R_{\mathcal{D}}^2+2) \right) \right. \\ \nonumber &&\left.+\frac{c_0}{2} b_0^2R_{\mathcal{D}}^2(R_{\mathcal{D}}^2+2)\right] f_{x_0}(x_0) dx_0
\\
&\label{q12}=&
- \frac{\eta a_1}{2  R_{\mathcal{D}}^2}
\left[R_{\mathcal{D}}^2(R_{\mathcal{D}}^2+2)\left(\frac{1}{2}\frac{(1+d^2)^2\epsilon^2}{\eta^2} \ln \frac{(1+d^2)\epsilon}{\eta}-\frac{1}{2}\frac{(1+d^2)^2\epsilon^2}{\eta^2}\right)\right. \\ \nonumber &&\left.+\frac{(1+d^2)^2\epsilon^2}{2\eta^2} ((1+R_{\mathcal{D}}^2)^2\ln(1+R_{\mathcal{D}}^2)     +4e_1 R_{\mathcal{D}}^2(R_{\mathcal{D}}^2+2))\right].
\end{eqnarray}
\end{figure*}
For the special case of $\alpha=2$, the third probability in \eqref{eqpi} can be obtained by applying the following:
\begin{align}\nonumber
&\mathcal{P}(x_0<\epsilon, x_i<\epsilon) \\ \nonumber &=\frac{1}{ R_{\mathcal{D}}^2}
\int^{R^2_{\mathcal{D}}}_{0}\left(1-e^{-(1+d^2)\epsilon}\right)\left(1-e^{-(1+z)\epsilon}\right)dz
\\ \nonumber &=\frac{1}{ R_{\mathcal{D}}^2}
\left(1-e^{-(1+d^2)\epsilon}\right)\left(R_{\mathcal{D}}^2-e^{-\epsilon}\frac{1}{\epsilon} \left(1-e^{- R_{\mathcal{D}}^2 \epsilon}\right)\right)\\ \nonumber &\approx\frac{1}{ R_{\mathcal{D}}^2}
 (1+d^2)\epsilon \left(R_{\mathcal{D}}^2-(1-\epsilon)  \left( R_{\mathcal{D}}^2  -\frac{1}{2} R_{\mathcal{D}}^4 \epsilon\right)\right)\\ \label{eq44} &= \frac{1}{2}(R_{\mathcal{D}}^2+2)(1+d^2)\epsilon^2.
\end{align}
Combining \eqref{theorem}, \eqref{q12} and \eqref{eq44}, the second part of the theorem can be proved.
  \hspace{\fill}$\blacksquare$\newline

\textsl{Proof of Proposition \ref{proposition 1} :}
The starting point of the proof is to    treat the source-relay distances as constants and average out the small scale fading channels in the expression for the outage probability. Then the proof is completed by showing that the outage probability   is a decreasing function of the distance.
Particularly, because of $N\geq 1$, $ \mathcal{P}(N=0, P x_0<\tau)=0$, and the outage probability can be obtained from \eqref{eqpi}  as follows:
\begin{eqnarray}
\tilde{\mathcal{P}}&=&
\int_{0}^{\epsilon} e^{-(1+d_i^\alpha)\epsilon}\left(1-2 q(d_i, \theta_i) \mathrm{K}_1\left(2q(d_i, \theta_i)\right) \right)\\ \nonumber &&\times f_{x_0}(x_0)dx_0 +\mathcal{P}(x_0<\epsilon, x_i<\epsilon),
\end{eqnarray}
where $d_i$ and $c_i$ have been treated as constants. Following   steps similar to those used in the proof of Theorem \ref{thorem1}, the outage probability can be approximated as follows:
\begin{eqnarray}
\tilde{\mathcal{P}} &\approx&-\frac{e_2(1+d^2)\epsilon^2}{2}  \left[  \ln \epsilon-\frac{1}{2}   +  \ln e_2 +2 c_0 \right]\\ \nonumber &&
   +(1+d^2)(1+d_i^2)\epsilon^2,
\end{eqnarray}
where $e_2=\frac{(1+d^\alpha)(1+d_i^\alpha)}{\eta}$. The derivative of the outage probability in terms of  $d_i$ is given by
\begin{align}\nonumber
&\quad\quad\frac{\partial\tilde{\mathcal{P}}}{\partial d_i} \\ \nonumber&\approx
-\frac{ 2d_i (1+d^2)^2\epsilon^2}{2\eta}  \left[  \ln \epsilon-\frac{1}{2}   +  \ln \frac{(1+d^\alpha)(1+d_i^\alpha)}{\eta} +2 c_0 \right]\\ \nonumber &-\frac{e_2(1+d^2)\epsilon^2}{2}  \left[  \ln \epsilon-\frac{1}{2}   + \frac{2d_i }{ (1+d_i^\alpha)} +2 c_0 \right]
   +2(1+d^2) d_i \epsilon^2\\ \nonumber &\approx
-\frac{ 2d_i (1+d^2)^2\epsilon^2}{2\eta}    \ln \epsilon -\frac{e_2(1+d^2)\epsilon^2}{2}    \ln \epsilon<0,
\end{align}
which demonstrates that the outage probability is a decreasing function of the source-relay distance. And the proof is completed.
  \hspace{\fill}$\blacksquare$\newline

\textsl{Proof for Theorem  \ref{theorem2} :}
Conditioned on the density $\lambda_\phi$, denote $R_{i^*}$ as the relay that is closest  to the source, and $f_{d_{i^*}}(r)$  as the pdf of the corresponding shortest distance. The probability $\mathcal{P}_{d_{i^*}}(d_{i^*}>r)$ can be interpreted as the event that there is no relay located in the disc, denoted as $\mathcal{D}_{r}$, with the source at its origin and $r$ as its radius. Consequently we have
\begin{eqnarray}
\mathcal{P}_{d_{i^*}}(d_{i^*}>r) = \left.\frac{(\mu(\mathcal{D}_r))^ke^{-\mu(\mathcal{D}_r)}}{k!}\right|_{k=0}= e^{-\pi\lambda_\phi r^2 }.
\end{eqnarray}
On the other hand, the probability $\mathcal{P}_{d_{i^*}}(d_{i^*}>r)$ can be expressed as
$$\mathcal{P}_{d_{i^*}}(d_{i^*}>r)=\mathcal{P}_{d_{i^*}}(d_{i^*}>r, N\geq 1)+\mathcal{P}_{d_{i^*}}(d_{i^*}>r, N=0),$$
which leads to
\begin{align}\nonumber
&\mathcal{P}_{d_{i^*}}(d_{i^*}>r| N\geq 1) \\=&\frac{ \mathcal{P}_{d_{i^*}}(d_{i^*}>r, N\geq 1) }{\mathcal{P}_{d_{i^*}}(N\geq 1)}
\\ \nonumber =&\frac{\mathcal{P}_{d_{i^*}}(d_{i^*}>r)-\mathcal{P}_{d_{i^*}}(d_{i^*}>r, N=0)}{\mathcal{P}_{d_{i^*}}(N\geq 1)}
\\ \nonumber =&\frac{ e^{-\pi\lambda_\phi r^2 }-e^{-\pi\lambda_\phi R_{\mathcal{D}}^2 }}{1-e^{-\pi\lambda_\phi R_{\mathcal{D}}^2 }}.
\end{align}
Therefore the cumulative density function (CDF)  of $d_{i^*}$ conditioned on $N\geq1$ is $\left[1-\mathcal{P}_{d_{i^*}}(d_{i^*}>r| N\geq 1)\right]$, and the corresponding pdf is given by
\begin{eqnarray}\label{pdf zeta}
f_{d_{i^*}}(r)  &=& 2\zeta\pi\lambda_\phi r e^{-\pi \lambda_\phi r^2 },
\end{eqnarray}
where $\zeta=\frac{1}{1-e^{-\pi\lambda_\phi R_{\mathcal{D}}^2 }}$. Conditioned on $N\geq 1$, the addressed outage probability can be obtained from \eqref{eqpi} as follows:
\begin{eqnarray}\nonumber
\mathcal{P}_i &=&   \underset{Q_3}{\underbrace{\mathcal{P}\left(x_0+ \eta y_{i^*} \left( x_{i^*} - \epsilon \right) < \epsilon, x_{i^*}>\epsilon|N\geq 1\right)}} \\ \label{eqpi3} &&+\mathcal{P}(x_0<\epsilon, x_{i^*}<\epsilon| \bar{\bar{\mathcal{S}}}\geq 1).
\end{eqnarray}
Following the  steps similar to those used in the proof of Theorem \ref{thorem1}, the probability $Q_3$ can be approximated as follows:
\begin{align}
Q_3  &\approx \nonumber
\int^{R_{\mathcal{D}}}_{0}\int_{0}^{\epsilon}\left(- 2 q^2(r, \theta_i)\left[\ln q(r, \theta_i) +c_0\right]
\right)\\ \nonumber &\times f_{x_0}(x_0)dx_0 f_{d_{i^*}}(r)dr\\ \nonumber \nonumber &\approx \nonumber
- \frac{(1+d^2)\epsilon^2}{4}\int^{R_{\mathcal{D}}}_{0}
b_1  \left[2 \ln b_1\epsilon -1 +4c_0\right]  f_{d_{i^*}}(r)d r,
\end{align}
where $b_1=\frac{(1+d^2)(1+r^2)}{\eta}$. By applying the pdf $f_{d_{i^*}}(r)$ in \eqref{pdf zeta}, the probability can be approximated as in \eqref{q3eq},
\begin{figure*}
\begin{eqnarray}
Q_3 &\approx& \label{q3eq}
- \frac{(1+d^2)\epsilon^2}{4}\int^{R_{\mathcal{D}}}_{0}
b_1  \left[2 \ln b_1\epsilon -1 +4c_0\right]  2\pi\lambda_\phi \zeta r e^{-\pi \lambda_\phi r^2 }d r\\ \nonumber
&=&\nonumber
- \frac{\zeta\pi\lambda_\phi(1+d^2)^2\epsilon^2e^{\lambda_\phi \pi}}{4\eta}\int^{1+R_{\mathcal{D}}^2}_{1}
y  \left[2 \ln \frac{(1+d^2)y}{\eta}+2\ln\epsilon -1 +4c_0\right]     e^{-\pi\lambda_\phi y}d y
\\ \nonumber
&=&\nonumber
- \frac{\pi\lambda_\phi\zeta (1+d^2)^2\epsilon^2e^{\pi\lambda_\phi }}{4\eta}
 \left[b_3+2b_4\ln\epsilon -b_4(1-4c_0)\right],
\end{eqnarray}
\end{figure*}
where $b_3=2 \int^{1+R_{\mathcal{D}}^2}_{1}\ln \frac{(1+d^2)y}{\eta}e^{-\pi\lambda_\phi y}d y$ and $b_4= \frac{1}{\pi^2\lambda_\phi^2}(\gamma(2,\pi\lambda_\phi(1+R_{\mathcal{D}}^2))-\gamma(2,\pi\lambda_\phi)) $.

On the other hand, we have
\begin{align}
&\mathcal{P}(x_0<\epsilon, x_{i^*}<\epsilon|N\geq 1)\\ \nonumber \approx &    2\pi\lambda_\phi\zeta(1+d^2)\epsilon^2 \int^{R_{\mathcal{D}}}_{0} (1+r^2)r e^{-\pi\lambda_\phi r^2 }dr\\ \nonumber =&  \zeta(1+d^2)\epsilon^2 \left(1-(1+R_{\mathcal{D}}^2)e^{-\pi\lambda_\phi R_{\mathcal{D}}^2} +\frac{1}{\pi\lambda_\phi} -\frac{e^{-\pi \lambda_\phi R^2}}{\pi\lambda_\phi}\right),
\end{align}
where the approximation is obtained via $(1-e^{-x})\approx x$ for $x\rightarrow 0$.
Therefore, conditioned on $N\geq 1$, an asymptotic expression for the outage probability is given by
\begin{eqnarray}\label{eqpi2}
\mathrm{P}_i\approx - \frac{\pi\lambda_\phi\zeta (1+d^2)^2\epsilon^2e^{\pi\lambda_\phi }}{4\eta}
 \left[b_3+2b_4\ln\epsilon -b_4(1-4c_0)\right]
 \\ \nonumber + \left(1-(1+R_{\mathcal{D}}^2)e^{-\pi\lambda_\phi R_{\mathcal{D}}^2} +\frac{1}{\pi\lambda_\phi} -\frac{e^{-\pi \lambda_\phi R^2}}{\pi\lambda_\phi}\right)\\ \nonumber \times \zeta(1+d^2)\epsilon^2.
\end{eqnarray}
On noting  that  the two parameters $b_3$ and $b_4$ are not functions of $\epsilon$, the diversity gain can be obtained as follows:
\begin{eqnarray}
-\underset{P\rightarrow \infty}{\lim}\frac{\log \mathrm{P}_i}{\log P}&=& -\underset{P\rightarrow \infty}{\lim}\frac{ \log\left(\epsilon^2\ln\frac{1}{\epsilon  } \right)}{\log P}=2.
\end{eqnarray}
And the theorem is proved.
  \hspace{\fill}$\blacksquare$\newline

\textsl{Proof of Theorem \ref{theorem3} :}
The proof can be completed by first finding an upper bound of the outage probability and then showing that the achievable diversity gain is the same as the maximum diversity gain. The probability in \eqref{outage 2} can be upper bounded as follows:
\begin{eqnarray}\nonumber
Q_5 &\leq&\prod^{n}_{i=1}\mathcal{P}\left( z_{i}<\frac{\tau-x_0P}{\eta}, x_{i}>\epsilon, i\in \tilde{\mathcal{S}}\right)
\\\label{q51}
&=& \left[\mathcal{P}\left( z_{i}<\frac{\tau-x_0P}{\eta}, x_{i}>\epsilon\right)\right]^n,%
\end{eqnarray}
where $n=|\tilde{\mathcal{S}}|$ is used for notational simplicity and the condition  $i\in \tilde{\mathcal{S}}$ is removed since $x_i>\epsilon$.   Note that  the probability in \eqref{q51} is exactly the same as  \eqref{q1} when treating $x_0$ as a constant. Therefore, following similar steps in the proof in Theorem \ref{thorem1}, the probability $Q_5$ can be obtained as
\begin{eqnarray}
Q_5  \approx\nonumber
\frac{2^n}{R_{\mathcal{D}}^{2n}b_0^n}
\left[\frac{1}{4}\left(b_0^2(1+R_{\mathcal{D}}^2)^2\ln(b_0(1+R_{\mathcal{D}}^2)) - b_0^2\ln b_0\right.\right. \\ \nonumber \left.\left. -\frac{1}{2}b_0^2R_{\mathcal{D}}^2(R_{\mathcal{D}}^2+2) \right) +\frac{c_0}{2} b_0^2R_{\mathcal{D}}^2(R_{\mathcal{D}}^2+2)\right]^n.
\end{eqnarray}
Define {\small $$\mathcal{E}_1=  \underset{x_0<\frac{x_0}{P}}{ \mathcal{E}} \left\{\mathcal{P}\left( \sum^{n}_{i=1}z_{\pi_{\tilde{\mathcal{S}}}(i)}<\frac{\tau-x_0P}{\eta}, x_{\pi_{\tilde{\mathcal{S}}}(i)}>\epsilon, 1\leq i\leq n\right)\right\}.$$} By first applying some algebraic manipulations to $Q_5$,   this expectation   can be written as follows:
\begin{align}
\mathcal{E}_1
 &<
- \frac{\eta}{2^n  R_{\mathcal{D}}^{2n}}\int_{0}^{\frac{(1+d^2)\epsilon}{\eta}}
\left[\beta_1t \ln t+\beta_2 t\right]^n \left(a_1 +\eta t\right)  dt
\\ \nonumber
&=
- \frac{a_1\eta}{2^n  R_{\mathcal{D}}^{2n}}\sum^{n}_{k=0} {n \choose k}\beta_1^{k} \beta_2^{n-k} \int_{0}^{\frac{(1+d^2)\epsilon}{\eta}} t^n (\ln t)^k   dt\\ \nonumber &- \frac{\eta^2}{2^n  R_{\mathcal{D}}^{2n}}\sum^{n}_{k=0} {n \choose k}\beta_1^{k} \beta_2^{n-k} \int_{0}^{\frac{(1+d^2)\epsilon}{\eta}} t^{n+1} (\ln t)^k   dt,
\end{align}
where
$\beta_1=R_{\mathcal{D}}^2(R_{\mathcal{D}}^2+2)$ and $\beta_2=(1+R_{\mathcal{D}}^2)^2\ln(1+R_{\mathcal{D}}^2)     +4e_1 R_{\mathcal{D}}^2(R_{\mathcal{D}}^2+2)$.
Note that
\[
 \int_{0}^{\psi} t^n (\ln t)^k   dt = \psi^{n+1}\sum^{k+1}_{m=1} (-1)^{m+1} \frac{k!(\ln \psi)^{k-m+1}}{(k-m+1)!(n+1)^{m}}.
\]
By using such a result, we have
\begin{align}
\mathcal{E}_1 &\approx
- \frac{a_1\eta}{2^n  R_{\mathcal{D}}^{2n}} \beta_1^{n}    \left(\frac{(1+d^2)\epsilon}{\eta}\right)^{n+1} \frac{n!\left(\ln \frac{(1+d^2)\epsilon}{\eta}\right)^{n}}{(n+1)!}.
\end{align}
On the other hand, the probability of having $(N-n)$ relays that cannot decode the source message is given by
\begin{eqnarray}\label{outage 3}
  \mathcal{P}\left(x_{\pi_{\tilde{\mathcal{S}}^c}(j)}<\epsilon, 1\leq j\leq N-n\right) =   \left(\mathcal{P}\left(x_{i}<\epsilon\right) \right)^{N-n}.
\end{eqnarray}
Again by applying stochastic geometry, we   have
\begin{eqnarray}
\mathcal{P}\left(x_{i}<\epsilon\right) &=& \underset{\mathcal{D}}{\int} \left(1-e^{\epsilon(1+d_i^2)}\right)p_{W_i}(w_i)dw_i\\ \nonumber
&=& \frac{1}{\pi R_{\mathcal{D}}^2}\int^{R_{\mathcal{D}}}_0\int^{2\pi}_0 \left(1-e^{\epsilon(1+r^2)}\right) rdrd\theta
\\ \nonumber
&=& \frac{1}{ R^2_{\mathcal{D}}}\left(R_{\mathcal{D}}^2 - \frac{e^{-\epsilon}}{\epsilon}\left(1-e^{-\epsilon R_{\mathcal{D}}^2}\right)\right).
\end{eqnarray}
By applying the high SNR approximation, we   obtain
\begin{eqnarray}\nonumber
\mathcal{P}\left(x_{i}<\epsilon\right)
&=& \frac{1}{ R^2_{\mathcal{D}}}\left(R_{\mathcal{D}}^2 - \frac{e^{-\epsilon}}{\epsilon}\left(1-\sum_{j=1}^{\infty}(-1)^{j}\frac{\epsilon^j R_{\mathcal{D}}^{2j}}{j!}\right)\right)\\ \nonumber
&\approx& \frac{1}{ R^2_{\mathcal{D}}}\left(R_{\mathcal{D}}^2 - \frac{1-\epsilon}{\epsilon}\left(\epsilon R_{\mathcal{D}}^{2} -\frac{\epsilon^2 R_{\mathcal{D}}^{4}}{2}\right)\right)
\\
&\approx& \epsilon\left(1+\frac{R^2_{\mathcal{D}}}{2}\right).
\end{eqnarray}
Now the overall outage probability can be upper bounded as follows:
\begin{eqnarray}\label{outage 21}
 \mathcal{P_{N}} &<&   - \underset{\Pi}{\sum}  \epsilon^{N+1}\left(1+\frac{R^2_{\mathcal{D}}}{2}\right)^{N-n}
 \frac{a_1\eta}{2^n  R_{\mathcal{D}}^{2n}} \beta_1^{n}   \\ \nonumber &&\times  \left(\frac{(1+d^2) }{\eta}\right)^{n+1} \frac{n!\left(\ln \frac{(1+d^2)\epsilon}{\eta}\right)^{n}}{(n+1)!}.
\end{eqnarray}
Following   steps similar to those used in the proof of Theorem \ref{theorem2}, it can be proved that the achievable diversity gain is $(N+1)$. $(N+1)$ is the maximum diversity gain for the addressed network, which can be shown by using   a conventional cooperative network to obtain the upper bound,   and the theorem is proved.
  \hspace{\fill}$\blacksquare$\newline

\textsl{Proof of Theorem \ref{pro2} :}
By ignoring the cost, the value of $\mathcal{S}_m$ can be written as follows:
\begin{eqnarray}
v(\mathcal{S}_m) &=& \sum_{j\in \mathcal{S}_m} \frac{ SNR_{\mathcal{S}_m} -  SNR_{\mathcal{S}_m\diagup j} }{  SNR_{\mathcal{S}_m} } \\ \nonumber &=&
 \frac{  \sum_{j\in \mathcal{S}_m} \frac{P_{mj} |g_j|^2}{ {1+c_j^\alpha}} }{   \frac{P|h_{dm}|^2}{1+d_{0m}^\alpha}+\sum_{j\in \mathcal{S}_m} \frac{P_{mj} |g_j|^2}{ {1+c_j^\alpha}} }.
\end{eqnarray}
And the value of $\mathcal{S}_m$ after removing $i$ can be written as follows:
\begin{eqnarray}
v(\mathcal{S}_m\diagdown \{i\}) &=&
 \frac{  \sum_{j\in \mathcal{S}_m,j\neq i} \frac{P_{mj} |g_j|^2}{ {1+c_j^\alpha}} }{   \frac{P|h_{dm}|^2}{1+d_{0m}^\alpha}+\sum_{j\in \mathcal{S}_m,j\neq i} \frac{P_{mj} |g_j|^2}{ {1+c_j^\alpha}} }.
\end{eqnarray}
On the other hand, the value of $\mathcal{S}_n$ can be easily obtained as follows:
\begin{eqnarray}\label{step11}
v(\mathcal{S}_n) &=&
 \frac{  \frac{P_{mi} |g_i|^2}{ {1+c_i^\alpha}}}{   \frac{P|h_{dm}|^2}{1+d_{0m}^\alpha}+ \frac{P_{mi} |g_i|^2}{ {1+c_i^\alpha}} },
\end{eqnarray}
and the    value of $\mathcal{S}_n$ after removing $i$  is zero, i.e., $v(\mathcal{S}_n\diagdown \{i\}) =0$. Consequently, the overall network benefit of $\mathcal{S}_m$ and $\mathcal{S}_n$ can be obtained as follows:
\begin{eqnarray}\label{step12}
v(\mathcal{S}_m)+v(\mathcal{S}_n\diagdown  \{i\}) = \frac{  \sum_{j\in \mathcal{S}_m,j\neq i} \frac{P_{mj} |g_j|^2}{ {1+c_j^\alpha}} }{   \frac{P|h_{dm}|^2}{1+d_{0m}^\alpha}+\sum_{j\in \mathcal{S}_m} \frac{P_{mj} |g_j|^2}{ {1+c_j^\alpha}} }\\ \nonumber +\frac{    \frac{P_{mi} |g_i|^2}{ {1+c_i^\alpha}} }{   \frac{P|h_{dm}|^2}{1+d_{0m}^\alpha}+\sum_{j\in \mathcal{S}_m} \frac{P_{mj} |g_j|^2}{ {1+c_j^\alpha}} },
\end{eqnarray}
when the relay $i$ leaves $\mathcal{S}_n$ and joins in $\mathcal{S}_m$, and
\begin{eqnarray}
v(\mathcal{S}_m\diagdown \{i\})+v(\mathcal{S}_n) =
 \frac{  \sum_{j\in \mathcal{S}_m,j\neq i} \frac{P_{mj} |g_j|^2}{ {1+c_j^\alpha}} }{   \frac{P|h_{dm}|^2}{1+d_{0m}^\alpha}+\sum_{j\in \mathcal{S}_m,j\neq i} \frac{P_{mj} |g_j|^2}{ {1+c_j^\alpha}} }\\ \nonumber + \frac{  \frac{P_{mi} |g_i|^2}{ {1+c_i^\alpha}}}{   \frac{P|h_{dm}|^2}{1+d_{0m}^\alpha}+ \frac{P_{mi} |g_i|^2}{ {1+c_i^\alpha}} },
\end{eqnarray}
when the relay $i$ leaves $\mathcal{S}_m$ and joins in $\mathcal{S}_n$. Since $\phi_i(\mathcal{S}_m)< \phi_i(\mathcal{S}_n)$, we have
\begin{eqnarray}\label{step2}
 \frac{   \frac{P_{mj} |g_i|^2}{ {1+c_i^\alpha}} }{   \frac{P|h_{dm}|^2}{1+d_{0m}^\alpha}+\sum_{j\in \mathcal{S}_m} \frac{P_{mj} |g_j|^2}{ {1+c_j^\alpha}} } <   \frac{  \frac{P_{mi} |g_i|^2}{ {1+c_i^\alpha}}}{   \frac{P|h_{dm}|^2}{1+d_{0m}^\alpha}+ \frac{P_{mi} |g_i|^2}{ {1+c_i^\alpha}} }.
\end{eqnarray}
Combining \eqref{step11}, \eqref{step12} and \eqref{step2}, the proposition can be proved.
  \hspace{\fill}$\blacksquare$\newline

\end{document}